\newtheorem{thm}{Theorem}
\newtheorem{lem}{Lemma}
\newtheorem{cor}{Corollary}
\newcommand{\genlegendre}[4]{%
  \genfrac{(}{)}{}{#1}{#3}{#4}%
  \if\relax\detokenize{#2}\relax\else_{\!#2}\fi
}
\newcommand{\legendre}[3][]{\genlegendre{}{#1}{#2}{#3}}
\newcommand{\Tr}{{\mathrm{Tr}}}
\newcommand{\fqtwo}{{\mathbb F}_{q^2}}
\newcommand{\fq}{{\mathbb F}_{q}}
\newcommand{\fpn}{{\mathbb F}_{p^n}}
\newcommand{\ftwon}{{\mathbb F}_{2^n}}
\newcommand{\fp}{{\mathbb F}_{p}}
\newcommand{\ftwo}{{\mathbb F}_{2}}
\begin{document}

\title{The differential spectrum and boomerang spectrum of a class of locally-APN functions}
\author{Zhao Hu, Nian Li, Linjie Xu, Xiangyong Zeng and Xiaohu Tang
\thanks{Z. Hu, N. Li, X. Zeng and X. Tang are with the Hubei Key Laboratory of Applied Mathematics, Faculty of Mathematics and
Statistics, Hubei University, Wuhan, 430062, China. L. Xu is with the WuHan Marine Communication Research Institute, Wuhan, 430000, China. X. Tang is also with the Information Security and National Computing Grid Laboratory, Southwest Jiaotong University, Chengdu, 610031, China. Email: zhao.hu@aliyun.com, nian.li@hubu.edu.cn, linjiexu@126.com, xzeng@hubu.edu.cn, xhutang@swjtu.edu.cn}
}
\date{}
\maketitle
\begin{quote}
  {\small {\bf Abstract:}   In this paper, we study the boomerang spectrum of the power mapping $F(x)=x^{k(q-1)}$ over $\fqtwo$, where $q=p^m$, $p$ is a prime, $m$ is a positive integer and $\gcd(k,q+1)=1$. We first determine the differential spectrum of $F(x)$ and show that $F(x)$ is locally-APN. This extends a result of [IEEE Trans. Inf. Theory 57(12):8127-8137, 2011] from $(p,k)=(2,1)$ to general $(p,k)$. We then determine the boomerang spectrum of $F(x)$ by making use of its differential spectrum, which shows that the boomerang uniformity of $F(x)$ is 4 if $p=2$ and $m$ is odd and otherwise it is 2. Our results not only generalize the results in [Des. Codes Cryptogr. 89:2627-2636, 2021] and [Adv. Math. Commun., doi: 10.3934/amc.2022046] but also extend the example $x^{45}$ over ${\mathbb F}_{2^8}$ in [Des. Codes Cryptogr. 89:2627-2636, 2021] into an infinite class of power mappings with boomerang uniformity 2.}

  {\small {\bf Keywords:} Boomerang spectrum, Differential spectrum, Locally-APN function.}

  {\small {\bf Mathematics Subject Classification} }  12E20 $\cdot$ 11T06 $\cdot$ 94A60
\end{quote}

\section{Introduction}
Let $\fpn$ be the finite field with $p^n$ elements and $\fpn^*=\fpn \setminus \{0\}$, where $p$ is a prime and $n$ is a positive integer. Substitution boxes (S-boxes), which can be seen as cryptographic functions over finite fields, play a very crucial role in the design of secure cryptographic primitives, such as block ciphers. Differential attack, proposed by Biham and Shamir \cite{BSA}, is one of the most fundamental cryptanalytic tools to assess the security of block ciphers. Let $F(x)$ be a function from  $\fpn$ to itself. The differential uniformity of $F(x)$ is defined as
\[\delta(F)=\max\limits_{a \in \fpn^*}\max\limits_{b \in \fpn}  \delta_F(a,b)\]
where $\delta_F(a,b)=|\{x \in \fpn| ~\mathbb{D}_{a}F(x)=b\}|$ and $\mathbb{D}_aF(x)=F(x+a)-F(x)$.
Differential uniformity is an important concept in cryptography introduced by \cite{NK} as it quantifies the degree of security of the cipher with respect to the differential attack if $F$ is used as an S-box in the cipher. The lower the quantity of $\delta(F)$, the stronger the ability of the function $F$ resisting the differential attack. The function $F$ is called a perfect nonlinear (PN) function if $\delta(F)=1$, and an almost perfect nonlinear (APN) function if $\delta(F)=2$. Observe that $\delta(F)$ is always even when $p=2$ since $x+a$ is also the solution of $\mathbb{D}_{a}F(x)=b$ if $x$ is a solution of $\mathbb{D}_{a}F(x)=b$. Thus APN functions over $\ftwon$ offer maximal resistance to differential attacks.

The differential spectrum of a function $F(x)$ over $\fpn$ is defined as the multiset
$\{\,\delta_F(a,b)\,:\, a \in \fpn^*,\, b\in \fpn\,\}.$
When $F(x)$ is  a power mapping, i.e., $F(x)=x^d$ for {a positive integer} $d$,  one sees that $\delta_F(a,b)=\delta_F(1,{b/{a^d}})$  for all $a\in \fpn^*$ and $b\in \fpn$.
That is to say,  the differential spectrum of $F(x)$ is completely determined by the values of $\delta_F(1,b)$ as $b$ runs through $\fpn$.
Therefore, the differential spectrum of a power mapping $F(x)$ over $\fpn$ with $\delta(F)=\delta$ can be simply defined as
\[\mathbb{DS}_{F} = \{\omega_0, \omega_1, \ldots, \omega_{\delta}\},\]
where $\omega_i=|\left\{b\in \fpn\mid \delta_F(1,b)=i\right\}|$ for $0\leq i\leq \delta$.
It is well-known that the identities
\begin{eqnarray}\label{identity-differential}
\sum_{i=0}^{\delta}\omega_{i}=p^n\;\;{\rm and}\;\;\sum_{i=0}^{\delta}i\omega_{i}=p^n
\end{eqnarray}
hold which are useful in computing the differential spectrum. A power function $F(x)$ over $\fpn$ is said to be locally-APN if $\max \{\delta_F(1,b): b\in \fpn \backslash \fp\} =2$. The reader is referred to \cite{BCCP} for details when $p=2$.

The differential spectrum of a nonlinear function not only plays an important role in cryptography \cite{BCCP1,BCCP,YXLHML} but also has wide applications in sequences \cite{DHKM}, coding theory \cite{BCCP1,CPJP} and combinatorial design \cite{TDXM}. Therefore, it is an interesting topic to completely determine the differential spectrum of a nonlinear function. The infinite families of power mappings with known differential spectra are listed in Table \ref{differential-table}.
%
\begin{table}[!htb]\footnotesize
\caption{The power mapping $F(x)=x^d$ over $\fpn$ with known differential spectrum}  \label{differential-table}
\renewcommand\arraystretch{0.8}
\setlength\tabcolsep{2pt}
\centering
\begin{tabular}{|c|c|c|c|c|c|}
\hline No.& $p$ & $d$                & Condition                   &  $\delta(F)$    & Refs.                  \\ \hline\hline
  1  & $2$ & $2^t+1$            &    $\gcd(t,n)=s$            &     $2^s$       &\cite{BCCP1,EMSS}            \\ \hline
  2  & $2$ & $2^{2t}-2^t+1$     & $\gcd(t,n)=s$, $n/s$ odd    &    $2^s$        &\cite{BCCP1}            \\ \hline
  3  & $2$ & $2^n-2$            &$n\geq 2$                    & $2$ or $4$      &\cite{BCCP1,EMSS}            \\ \hline
  4  & $2$ &$2^{2k}+2^k+1$      &$n=4k$                       & 4               &\cite{BCCP1,XYHM}       \\ \hline
  5  & $2$ &$2^t-1$             &$t=3,n-2$                    & 6 or 8          &\cite{BCCP}             \\ \hline
  6  & $2$ &$2^t-1$             &$t=n/2, n/2+1$, $n$ even     & $2^{n/2}-2$ or $2^{n/2}$  &\cite{BCCP}   \\ \hline
  7  & $2$ &$2^t-1$             &$t=(n-1)/2,(n+3)/2$, $n$ odd & $6$ or $8$      &\cite{BPLC}             \\ \hline
  8  & $2$ &$2^{3k}+2^{2k}+2^{k}-1$ & $n=4k$                  & $2^{2k}$        &\cite{LWZT}             \\ \hline
  9  & $2$ &$2^m+2^{(m+1)/2}+1$ &$n=2m$, $m\geq5$ odd         & $8$             &\cite{XYYP}             \\ \hline
  10 & $2$ &$2^{m+1}+3$         &$n=2m$, $m\geq5$ odd         & $8$             &\cite{XYYP}             \\ \hline
  11 & $3$ &$2\cdot 3^{(n-1)/2}+1$  &$n$ odd                  &$4$              &\cite{DHKM}             \\ \hline
  12 & $3$ &$(3^n-1)/2+2$       &$n$ odd                      &$4$              &\cite{JLLQ}              \\\hline
  13 & $5$ &$(5^n-3)/2$         &any $n$                      &$4$ or $5$       & \cite{YLCH}             \\\hline
  14 & $5$ &$(5^n-1)/2+2$         &any $n$                      &$3$       & \cite{PLZX}             \\\hline
  15 & $p$ odd &$(p^k+1)/2$     &$e=\gcd(n,k)$                &$(p^e-1)/2$ or $p^e+1$  &\cite{CHNC}       \\\hline
  16 & $p$ odd &$(p^n+1)/(p^m+1)+(p^n-1)/2$&$p \equiv 3 \pmod{4}$, $n$ odd, $m|n$ &$(p^m+1)/2$ &\cite{CHNC}\\\hline
  17 & $p$ odd &$p^{2k}-p^k+1$  &$\gcd(n,k)=e$, $n/e$ odd     &$p^e+1$          &\cite{YZWWHW,LRFC}       \\\hline
  18 & $p$ odd &$p^m+2$         & $n=2m$, $p>3$               &$4$         &\cite{MXLH}       \\\hline
  19 & $p$ odd &$p^n-3$         & any $n$                     &$ 1\leq  \delta(F) \leq 5$         &\cite{YXLHML,XZLH}       \\\hline
  20 & any     &$k(p^m-1)$        & $n=2m$, $\gcd(k,p^m+1)=1$  &$p^m-2$ (locally-APN)   & This paper       \\\hline
\end{tabular}
\end{table}

Boomerang attack is an important cryptanalysis technique introduced by Wagner in \cite{WAG} against block ciphers involving S-boxes. It can be considered as an extension of the differential attack \cite{BSA}.
At Eurocrypt 2018, Cid et al. \cite{CHPS} proposed a new tool called Boomerang Connectivity Table (BCT) to measure the resistance of a block cipher against the boomerang attack. In the same year, Boura et al. \cite{BCCA} introduced the boomerang uniformity based on BCT to quantify the resistance of a function against the boomerang attack. In 2019, Li et al. \cite{LQSL} presented an equivalent definition of the boomerang uniformity which can be generalized to any finite field. Namely, for $a,b\in \fpn^*$, let $\beta_{F}(a,b)$ denote the number of solutions in $\fpn^2$ of the following system of equations
\begin{eqnarray} \label{boomerang-eq}
\left\{\begin{array}{ll}
F(x)-F(y)=b,  \\
F(x+a)-F(y+a)=b.
\end{array} \right.
\end{eqnarray}
Then the boomerang uniformity of a function $F$ over $\fpn$ is defined as
\[\beta(F)=\max_{a\in \fpn^*} \max_{b\in \fpn^*} \beta_{F}(a,b).\]
The boomerang spectrum of a function $F$ over $\fpn$ is defined as
$\{\,\beta_F(a,b)\,:\, a \in \fpn^*,\, b\in \fpn^{*}\,\}$.
Similar to the differential spectrum, for a power mapping $F(x)$, it suffices to determine the values of $\beta_F(1,b)$ as $b$ runs through $\fpn^*$ to compute the boomerang spectrum of $F(x)$.
Thus the boomerang spectrum of a power mapping $F(x)$ over $\fpn$ with $\beta(F)=\beta$ can be simply defined as
\[\mathbb{BS}_{F} = \{\nu_0, \nu_1, \ldots, \nu_{\beta}\},\]
where $\nu_i=|\{b\in \fpn^*\mid \beta_F(b)=i\}|$ for $0\leq i\leq \beta$.

A function $F(x)$ with smaller $\beta(F)$ provides stronger security against boomerang-style attacks. It was known in \cite{CHPS,MTXM} that  $\beta_{F}(a,b)\geq \delta_{F}(a,b)$ for $a,b\in \ftwon$ which indicates $\beta(F)\geq \max_{a \in \ftwon^*}\max_{b \in \ftwon^*}  \delta_F(a,b)$.  Note that $\delta_{F}(a,0)=0$ for $a\in \ftwon^*$ if $F$ permutes $\ftwon$. This implies $\beta(F)\geq \delta(F)$ for a permutation $F$  over $\ftwon$ \cite{CHPS}. Specially, it was proved in \cite{CHPS,HPSP} that $\beta(F)=2$ if $\delta(F)=2$ for a function $F$ over $\ftwon$. The converse is also true for a permutation $F$ over $\ftwon$ \cite{CHPS} and it is not necessarily true for a non-permutation $F$ over $\ftwon$ \cite{HPSP}. When $p$ is odd, it was shown in \cite{JLLQ} that $\beta(F)=0$ if $F$ is a PN function. The reader is referred to \cite{BCCA,CVIM,EMSS,HPSP2,KMCLJ,LHXZ,LXZX,LLHQ,LQSL,MTXM,MMMM} for some recent results in this direction. Table \ref{boomerang-table} gives the results on the infinite families of power mappings with known boomerang uniformity.
\begin{table}[!htb]\footnotesize
\caption{The power mapping $F(x)=x^d$ over $\fpn$ with known boomerang uniformity}  \label{boomerang-table}
\renewcommand\arraystretch{0.8}
\setlength\tabcolsep{2pt}
\centering
\begin{tabular}{|c|c|c|c|c|c|c|}
\hline No.& $p$ & $d$      & Condition                               & $\beta(F)$  &Is $\mathbb{BS}_{F}$ known?  & Refs.                \\ \hline\hline
  1  & $2$ & $2^t+1$       &$s=\gcd(t,n)$                            & $2^s$ or $2^s(2^s-1)$  & Yes &\cite{BCCA,EMSS,HPSP2} \\ \hline
  2  & $2$ & $2^{2k}+2^k+1$&$n=4k$, $k$ odd                          & $\leq 24$    &   No         &\cite{CVIM}           \\ \hline
  3  & $2$ & $2^m-1$         &$n=2m$                     & $2$ or $4$   &    No        &\cite{HPSP}   \\ \hline
  4  & $2$ & $7$           &$n=2m$, $\gcd(3,m)=1$, $m>6$             & $10$         &   No         &\cite{ZHLZ}   \\ \hline
  5  & $2$ & $2^{m+1}-1$  &$n=2m$, $m>1$                             & $2^m+2$      & Yes &\cite{YZZZ}   \\ \hline
  6  & $3$ & $(3^n-1)/2+2$ &$n$ odd                            & $3$          & Yes &\cite{JLLQ}    \\ \hline
  7  & $p$ odd & $p^m-1$   &$n=2m$,  $p^m\not \equiv 2 \pmod{3}$ & $2$          &No &\cite{YLSF}  \\ \hline
  8  & $p$ odd & $((p^m+3)(p^m-1))/2$ &$n=2m$,  see \cite{YLSF}  & $2$          &No &\cite{YLSF}  \\ \hline
  9  & $p$ odd & $(p^n-3)/2$ &  $p^n\equiv 3 \pmod{4}$  & $\leq 6$        &No &\cite{YZZZ1}  \\ \hline
  10  & any & $p^n-2$       &any $n$                                  & $2\le \beta(F)\le 6$ & Yes &\cite{BCCA,JLLQ,EMSS}  \\ \hline
  11  & any & $k(p^m-1)$     &$n=2m$, $\gcd(k,p^m+1)=1$         & $2$ or $4$  & Yes &This paper   \\ \hline
\end{tabular}
\end{table}

In this paper, we study the differential spectrum and boomerang spectrum of the power mapping $F(x)=x^{k(q-1)}$ over $\fqtwo$, where $q=p^m$, $p$ is a prime, $m$ is a positive integer and $\gcd(k,q+1)=1$. We first determine the differential spectrum of $F(x)$ completely for any prime power $q$ and show that $F(x)$ is locally-APN. This contributes a new class of power mappings with known differential spectrum and extends the result of \cite[Theorem 7]{BCCP} from $(p,k)=(2,1)$ to general $(p,k)$. Based on the differential spectrum of $F(x)$, we then compute the boomerang spectrum of $F(x)$, which shows that the boomerang uniformity of $F(x)$ is $2$ or $4$. It should be noted that our results generalize those of \cite{HPSP,YLSF}, in which the boomerang uniformity of $F(x)=x^{k(q-1)}$ over $\fqtwo$ was studied for the case $p=2$, $k=1$ \cite{HPSP}, the case $p$ odd, $k=1$ with $q\not \equiv 2 \pmod{3}$ \cite{YLSF} and the case $p$ odd, $k=(q+3)/2$ with $q\not \equiv 2 \pmod{3}$ and $q\equiv 3 \pmod{4}$ \cite{YLSF}.
It can be seen from Table \ref{boomerang-table} that three classes of power mappings with known boomerang uniformity are covered by our results.
Moreover, our results explain the example $x^{45}$ over ${\mathbb F}_{2^8}$ (a locally-APN function with boomerang uniformity 2) found by Hasan, Pal and St\u{a}nic\u{a} \cite{HPSP} and generalize it into an infinite class of power functions with boomerang uniformity 2.


\section{Preliminaries}

From now on, we always assume that $n=2m$ and $q=p^m$, where $p$ is a prime and $m$ and $n$ are positive integers. Denote the unite circle of $\fqtwo$ by $U_{q+1}=\{x \in \fqtwo: x^{q+1} = 1\}$ throughout this paper.


When $p$ is an odd prime, for a given element $\alpha\in \fq^*$, $\alpha$ is a square of $\fq^*$ if $\alpha^{\frac{q-1}{2}}=1$ and $\alpha$ is a nonsquare of $\fq^*$ if $\alpha^{\frac{q-1}{2}}=-1$.


\begin{lem} \label{Issquare(-3)}
Let $p>3$ be an odd prime, $m$ be a positive integer and $q=p^m$. Then $-3$ is a nonsquare of $\fq^*$ if $q\equiv 2 \pmod{3}$ and a square of $\fq^*$ otherwise.
\end{lem}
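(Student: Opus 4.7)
The plan is to convert the question of whether $-3$ is a square in $\fq^*$ into the purely group-theoretic question of whether $\fq^*$ contains an element of order $3$. The bridge is the identity $x^2+x+1 = (x^3-1)/(x-1)$ together with the quadratic formula; both are available because $p>3$ guarantees that $2$ and $3$ are units of $\fq$.

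First I would establish the equivalence: $-3$ is a square in $\fq^*$ if and only if $\fq$ contains a root of $x^2+x+1$. Indeed, if $\omega\in\fq$ satisfies $\omega^2+\omega+1=0$, then $(2\omega+1)^2 = 4(\omega^2+\omega)+1 = -3$, so $-3$ is a square. Conversely, if $\alpha^2=-3$ in $\fq$, then $\omega:=(\alpha-1)/2\in\fq$ is a root of $x^2+x+1$, as a direct computation shows. This step uses only that $\mathrm{char}\,\fq\neq 2$.

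Second, since $x^2+x+1$ divides $x^3-1$ and is coprime to $x-1$ (because $3\neq 0$ in $\fq$), its roots in $\fq$ are exactly the primitive cube roots of unity lying in $\fq^*$. By the cyclicity of $\fq^*$, such an element exists if and only if $3\mid q-1$, i.e.\ $q\equiv 1\pmod{3}$. Since $p>3$ forces $\gcd(q,3)=1$, we have $q\equiv 1$ or $q\equiv 2\pmod 3$, and combining the two equivalences yields the claim. There is no serious obstacle; the only point requiring attention is making sure the hypothesis $p>3$ is invoked precisely where one needs to invert $2$ and $3$, so the argument does not collapse in small characteristics.
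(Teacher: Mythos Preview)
Your proof is correct and takes a genuinely different route from the paper's. The paper first reduces from $\fq$ to $\fp$ by writing $(-3)^{(q-1)/2}=(-3)^{\frac{p-1}{2}(p^{m-1}+\cdots+1)}$ and observing that this equals $1$ when $m$ is even and $(-3)^{(p-1)/2}$ when $m$ is odd; it then evaluates the latter via the Law of Quadratic Reciprocity, obtaining $\legendre{p}{3}$. Your argument instead stays in $\fq$ throughout: completing the square identifies squareness of $-3$ with the existence of a root of $x^2+x+1$ in $\fq$, which (since $p\neq 3$) is the same as the existence of a primitive cube root of unity, and cyclicity of $\fq^*$ turns that into $3\mid q-1$.

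The trade-off is that the paper's proof is a direct Legendre-symbol computation leaning on a named theorem (quadratic reciprocity), while yours is more self-contained and more conceptual: it avoids reciprocity entirely and explains \emph{why} the criterion is $q\equiv 1\pmod 3$---precisely because $-3$ is the discriminant of the third cyclotomic polynomial. Your approach also handles all $m$ uniformly, whereas the paper splits on the parity of $m$ before invoking the result for $\fp$.
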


\begin{proof}
A direct computation gives
\begin{eqnarray*}
(-3)^{\frac{p^m-1}{2}}=(-3)^{\frac{p-1}{2}(p^{m-1}+\cdots+p+1)}=\left\{
\begin{array}{ll}
1, &\mbox{if $m$ is even}; \\
(-3)^{\frac{p-1}{2}}, &\mbox{if $m$ is odd}.
\end{array} \right.
\end{eqnarray*}
Thus it suffices to consider whether $-3$ is a square of $\fp^*$ or not.
Note that $(-3)^{\frac{p-1}{2}}=(-1)^{\frac{p-1}{2}}\legendre{3}{p}$, where $\legendre{3}{p}$
denotes the Legendre symbol.
Using the Law of Quadratic Reciprocity \cite[Theorem 5.17]{Lidl} gives
\begin{eqnarray*}
(-3)^{\frac{p-1}{2}}=\legendre{p}{3}=\left\{
\begin{array}{ll}
1, &\mbox{if $p\equiv 1 \pmod{3}$}; \\
-1, &\mbox{if $p\equiv 2 \pmod{3}$}.
\end{array} \right.
\end{eqnarray*}
This completes the proof.
\end{proof}

As a direct result of \cite[Lemma 4]{TZLH}, the following lemma is useful for the subsequent section.

\begin{lem} \label{square-eq-p=2}
Let $n=2m$ be an even positive integer, $q=2^m$ and $b\in \fqtwo^{*}$. Then the equation $x^2+bx+\frac{1}{b^{q-1}}=0$ has two distinct solutions in $U_{q+1}$ if and only if $\Tr_{1}^{m}(\frac{1}{b^{q+1}})=1$, where $\Tr_{1}^{m}(\cdot)$ is the trace function from $\mathbb{F}_{2^m}$ to $\mathbb{F}_{2}$.
\end{lem}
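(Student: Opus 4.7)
The plan is to reduce the given quadratic to the Artin--Schreier form $y^2 + y + c = 0$ with $c \in \fq$, and then decide when its roots land in $U_{q+1}$ via the usual trace criterion in characteristic $2$.

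First, since $b \neq 0$, I would substitute $x = by$ and divide by $b^2$, turning $x^2 + bx + \frac{1}{b^{q-1}} = 0$ into $y^2 + y + c = 0$, where $c := \frac{1}{b^{q+1}}$. Because $b^{q^2} = b$, we get $c^q = c$, so $c \in \fq$. The resulting polynomial is separable, hence always has two distinct roots in $\overline{\fq}$; and since $\Tr_1^{2m}(c) = \Tr_1^m(c + c^q) = 0$, these roots lie in $\fqtwo$.

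Next, I would distinguish the two cases according to $\Tr_1^m(c)$. If $\Tr_1^m(c) = 1$, the roots $y_1, y_2$ lie in $\fqtwo \setminus \fq$ and are Galois conjugates over $\fq$, so $y_2 = y_1^q$; by Vieta, $y_1^{q+1} = y_1 y_2 = c = \frac{1}{b^{q+1}}$, and therefore $x_i^{q+1} = b^{q+1} y_i^{q+1} = 1$ for $i = 1,2$, placing both roots in $U_{q+1}$. If instead $\Tr_1^m(c) = 0$, then $y_i \in \fq$, so $x_i^{q+1} = b^{q+1} y_i^2$; using the relation $y_i^2 = y_i + c$ together with $b^{q+1} c = 1$, this equals $b^{q+1} y_i + 1$, which forces $y_i = 0$ (and hence $c = 0$) for a root to lie in $U_{q+1}$, contradicting $b \in \fqtwo^*$.

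The argument is essentially one of careful bookkeeping after the substitution $x = by$, and there is no substantial obstacle: the lemma is a direct specialization of \cite[Lemma 4]{TZLH}, which already characterizes in general when both roots of a quadratic over $\fqtwo$ lie on the unit circle. The only subtlety worth double-checking is that the trace condition $\Tr_1^m(1/b^{q+1}) = 1$ is simultaneously what ensures the roots of $y^2 + y + c$ avoid $\fq$ and what matches $y_1^{q+1} = c$ through the Galois action, so that $x_i = b y_i \in U_{q+1}$.
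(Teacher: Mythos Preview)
Your argument is correct and complete. The paper itself does not prove this lemma at all: it simply records it as ``a direct result of \cite[Lemma 4]{TZLH}'' without further justification, which you also acknowledge. So you have in fact done more than the paper by spelling out the Artin--Schreier reduction $x=by$ and the trace dichotomy explicitly; there is nothing to compare beyond that.
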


The following result can be proved by Lemma 4 and Remark 1 of \cite{TZ2019}. For the reader's convenience, we provide a proof here.

\begin{lem} \label{square-eq-p>2}
Let $n=2m$ be an even positive integer, $q=p^m$, $p$ be an odd prime and $b\in \fqtwo^{*}$. Then $x^2+bx+\frac{1}{b^{q-1}}=0$ has two distinct solutions in $U_{q+1}$ if and only if $\theta$ is a nonsquare in $\fq^*$, where $\theta=\frac{b^{q+1}-4}{b^{q+1}}$.
\end{lem}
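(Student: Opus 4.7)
The plan is to solve the quadratic equation explicitly, which is available because $p$ is odd. Rewriting the discriminant as
\[
b^2 - \frac{4}{b^{q-1}} = b^2\cdot\frac{b^{q+1}-4}{b^{q+1}} = b^2\theta,
\]
the two roots in an algebraic closure are $x_\pm = -\frac{b}{2}(1\mp\sqrt{\theta})$. The first useful observation is that $\theta\in\fq$, because $b^{q+1}\in\fq$; therefore $\theta$ being a square or a nonsquare of $\fq^*$ is well defined. The second is that the two roots are distinct iff $\theta\neq 0$, and in that case, because their product equals $1/b^{q-1}$, one has $(x_+x_-)^{q+1}=1/b^{q^2-1}=1$, so one root lies in $U_{q+1}$ if and only if both do. Thus the question reduces to deciding when a single root $x_\pm$ satisfies $x^{q+1}=1$.

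I would then split into two cases according to whether $\theta$ is a square in $\fq^*$. If $\theta=\eta^2$ with $\eta\in\fq^*$, then $\sqrt{\theta}\in\fq$ and a direct computation gives $x_\pm^{q+1}=b^{q+1}(1\mp\eta)^2/4$; setting this equal to $1$ yields $(1\mp\eta)^2=4/b^{q+1}=1-\theta=(1-\eta)(1+\eta)$, which forces $\eta\in\{0,\pm 1\}$. Since $\eta=0$ contradicts $\theta\neq 0$ and $\eta=\pm 1$ would force $4/b^{q+1}=0$, no root lies in $U_{q+1}$.

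If instead $\theta$ is a nonsquare in $\fq^*$, then $\sqrt{\theta}\in\fqtwo\setminus\fq$, and from $\theta^q=\theta$ one deduces $(\sqrt{\theta})^q=-\sqrt{\theta}$. Then
\[
x_\pm^{q+1} \;=\; \Bigl(-\frac{b}{2}(1\mp\sqrt{\theta})\Bigr)^{q}\cdot\Bigl(-\frac{b}{2}(1\mp\sqrt{\theta})\Bigr) \;=\; \frac{b^{q+1}}{4}(1\pm\sqrt{\theta})(1\mp\sqrt{\theta}) \;=\; \frac{b^{q+1}}{4}(1-\theta) \;=\; 1,
\]
so both roots lie in $U_{q+1}$, and they are distinct because $\theta\neq 0$. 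Combining the two cases gives the claimed equivalence.

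There is no serious obstacle; the argument is a direct application of the quadratic formula together with the action of the Frobenius $y\mapsto y^q$ on $\sqrt{\theta}$. The only delicate point worth stating carefully is that $\sqrt{\theta}$ is genuinely Galois-conjugate to $-\sqrt{\theta}$ in the nonsquare case, which is precisely what produces the factor $1-\theta=4/b^{q+1}$ needed to cancel the $b^{q+1}/4$ coming from $x^qx$.
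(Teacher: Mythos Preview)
Your proof is correct and follows essentially the same approach as the paper: write the roots via the quadratic formula as $-\tfrac{b}{2}(1\mp\sqrt{\theta})$, use that $\sqrt{\theta}^{\,q}=\pm\sqrt{\theta}$ according to whether $\theta$ is a square or a nonsquare in $\fq^*$, and compute $x^{q+1}$ in each case. Your execution is in fact a bit cleaner than the paper's (you avoid the intermediate identity the paper labels \eqref{lem2-eq-1} and get the nonsquare case in one line), but the underlying argument is identical.
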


\begin{proof}
Let $\Delta=\frac{b^{q+1}-4}{b^{q-1}}$ be the discriminant of the quadratic equation $x^2+bx+\frac{1}{b^{q-1}}=0$. Observe that $b^{q+1}-4\in\fq$ and $\Delta$ is a square of $\fqtwo^*$ if $b^{q+1}-4\ne 0$. That is to say, $x^2+bx+\frac{1}{b^{q-1}}=0$ always has two distinct solutions $x_{1}=\frac{-b+\sqrt{\Delta}}{2}$ and $x_{2}=\frac{-b-\sqrt{\Delta}}{2}$
in $\fqtwo^*$ if and only if $b^{q+1}-4\ne 0$. Next we always assume that $b^{q+1}-4\ne 0$. Notice that $x_{1}x_{2}=\frac{1}{b^{q-1}}\in U_{q+1}$ which implies that either both $x_{1}$ and $x_{2}$ are in $U_{q+1}$ or neither $x_{1}$ nor $x_{2}$ is in $U_{q+1}$. Thus it suffices to consider whether $x_{1}$ is in $U_{q+1}$ or not.

Let $\theta=\frac{b^{q+1}-4}{b^{q+1}}\in \fq^*$. Then we have $x_{1}=b\frac{-1+\sqrt{\theta}}{2}$ and
\begin{eqnarray*}
x_{1}^{q+1}=\frac{b^{q+1}}{4}(\sqrt{\theta}^{q+1}-\sqrt{\theta}^q-\sqrt{\theta}+1).
\end{eqnarray*}
This together with the identities $b^{q+1}\sqrt{\theta}^{q+1}=(b^{q+1}-4)\sqrt{\theta}^{q-1}$, $b^{q+1}\sqrt{\theta}^q=\frac{(b^{q+1}-4)}{\sqrt{\theta}}\sqrt{\theta}^{q-1}$ and $b^{q+1}\sqrt{\theta}=\frac{(b^{q+1}-4)}{\sqrt{\theta}^q}\sqrt{\theta}^{q-1}$ implies that
\begin{eqnarray}\label{lem2-eq-1}
x_{1}^{q+1}=\frac{1}{4}((b^{q+1}-4)(1-\frac{1}{\sqrt{\theta}}-\frac{1}{\sqrt{\theta}^q})\sqrt{\theta}^{q-1}+b^{q+1}).
\end{eqnarray}
Notice that $\sqrt{\theta}^{q-1}=\theta^{\frac{q-1}{2}}$ is equal to $1$ (resp. $-1$) if $\theta$ is a square (resp. nonsquare) of $\fq^*$. We then consider the following two cases.

\textbf{Case 1}: $\theta$ is a square of $\fq^*$, i.e., $\sqrt{\theta}^{q-1}=1$. For this case, by \eqref{lem2-eq-1}, one gets
\begin{eqnarray}\label{lem2-eq-2}
x_{1}^{q+1}=\frac{1}{2}(b^{q+1}-2-\frac{b^{q+1}-4}{\sqrt{\theta}}).
\end{eqnarray}
Suppose that $x_{1}^{q+1}=1$, i.e., $x_1\in U_{q+1}$. Then by \eqref{lem2-eq-2} one obtains $\sqrt{\theta}=1$, i.e., $\theta=1$, which leads to $b^{q+1}-4=b^{q+1}$, a contradiction. This proves that $x_1\notin U_{q+1}$.

\textbf{Case 2}: $\theta$ is a nonsquare of $\fq^*$, i.e, $\sqrt{\theta}^{q-1}=-1$. It can be readily verified that in this case \eqref{lem2-eq-1} becomes $x_{1}^{q+1}=1$, i.e., $x_1\in U_{q+1}$.

This completes the proof.
\end{proof}

\section{Main results}
In this section, we investigate the differential spectrum and boomerang spectrum of the power mapping $F(x)=x^{k(q-1)}$ over $\fqtwo$, where $q=p^m$, $p$ is a prime and $m$, $k$ are positive integers with $\gcd(k,q+1)=1$.

\subsection{The differential spectrum and boomerang spectrum of $x^{k(p^m-1)}$ for $p=2$}

\begin{thm}\label{differential-p=2}
Let $F(x)=x^{k(q-1)}$ be a power mapping over $\fqtwo$, where $q=2^m$ and $m$, $k$ are positive integers with $\gcd(k,q+1)=1$. Then $F(x)$ is locally-APN with the differential spectrum
\[\mathbb{DS}_{F}=\{\omega_{0}=2^{2m-1}+2^{m-1}-2, \omega_{2}=2^{2m-1}-2^{m-1}+1, \omega_{2^m-2}=1\}\]
if $m$ is even; and for odd $m$, its differential spectrum is given by
\[\mathbb{DS}_{F}=\{\omega_{0}=2^{2m-1}+2^{m-1}-1, \omega_{2}=2^{2m-1}-2^{m-1}-1, \omega_{4}=1, \omega_{2^m-2}=1\}.\]
\end{thm}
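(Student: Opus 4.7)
The plan is to compute $\delta_F(1,b)=|\{x\in\fqtwo:(x+1)^{k(q-1)}+x^{k(q-1)}=b\}|$ for every $b\in\fqtwo$ and then assemble the spectrum. First I would isolate the contribution from $x\in\fq$: since $y^{q-1}\in\{0,1\}$ for $y\in\fq$, a direct inspection gives $q-2$ solutions to $b=0$ (the $x$'s in $\fq\setminus\{0,1\}$), $2$ solutions to $b=1$ (namely $x=0$ and $x=1$), and none to any other $b$.

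For $x\in\fqtwo\setminus\fq$, I would set $u=x^{q-1}\in U_{q+1}\setminus\{1\}$ and $v=(x+1)^{q-1}\in U_{q+1}\setminus\{1\}$. A short calculation shows that $x\mapsto(u,v)$ is a bijection from $\fqtwo\setminus\fq$ onto $\{(u,v)\in(U_{q+1}\setminus\{1\})^2:u\neq v\}$, with inverse $x=(v+1)/(u+v)$ (the two sides both have cardinality $q(q-1)$). Since $\gcd(k,q+1)=1$, raising to the $k$-th power is a permutation of $U_{q+1}$ fixing $1$, so the further substitution $s=u^k$, $t=v^k$ reduces the count to the number of ordered pairs $(s,t)\in(U_{q+1}\setminus\{1\})^2$ with $s\neq t$ and $s+t=b$. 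For $b\neq 0$ the inequality $s\neq t$ is automatic, so eliminating $t=b+s$ and imposing $t\in U_{q+1}$ produces the quadratic $s^2+bs+b^{1-q}=0$, exactly the equation treated in Lemma~\ref{square-eq-p=2}.

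Applying that lemma, the quadratic has two distinct roots in $U_{q+1}$ precisely when $\Tr_1^m(1/b^{q+1})=1$. Since the norm $b\mapsto b^{q+1}$ is $(q+1)$-to-$1$ onto $\fq^*$ and $\Tr_1^m$ is balanced, the number of $b\in\fqtwo^*$ with this property equals $q(q+1)/2$. I would then discard those $b$'s for which one of the roots equals $1$: substituting $s=1$ into the quadratic and using that $b+1\in U_{q+1}$ forces $b^{1-q}=b+1$, I find that $1$ is a root exactly when $b+1\in U_{q+1}$, which yields $q$ such exceptional $b$'s, and for each of them the other root $b+1$ also lies in $U_{q+1}$, so no admissible ordered pair $(s,t)$ survives. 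The remaining $q(q-1)/2$ trace-satisfying values of $b$ each contribute exactly $2$ solutions from $\fqtwo\setminus\fq$.

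Finally, the behaviour of $b=1$ depends on the parity of $m$: since $\Tr_1^m(1)=m\bmod 2$, $b=1$ satisfies the trace condition precisely when $m$ is odd. For $m$ even, $b=1$ is not among the $q(q-1)/2$ good $b$'s but still receives the $2$ solutions from $\fq$, so $\delta_F(1,1)=2$ and $\omega_2$ is bumped to $q(q-1)/2+1$. For $m$ odd, $b=1$ is among them, its two $\fqtwo\setminus\fq$-solutions coming from the primitive cube roots of unity (which lie in $U_{q+1}\setminus\{1\}$ because $3\mid q+1$ when $m$ is odd), and together with the two $\fq$-solutions this gives $\delta_F(1,1)=4$, so $b=1$ moves from $\omega_2$ into $\omega_4$. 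Computing $\omega_0$ from the identities~\eqref{identity-differential} then yields the claimed spectrum. The main subtlety, and the only real obstacle, is the simultaneous bookkeeping of the $q$ exceptional $b$'s with $b+1\in U_{q+1}$ and of the special value $b=1$ whose contribution switches with the parity of $m$; once these are tracked correctly, the rest of the argument is routine.
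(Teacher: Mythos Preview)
Your proof is correct and reaches the same quadratic $s^2+bs+b^{1-q}=0$ over $U_{q+1}$ that the paper derives, with Lemma~\ref{square-eq-p=2} as the common key tool, so the approaches are close in spirit. The execution differs in two noteworthy ways. First, where the paper multiplies \eqref{thm1-eq-1} by $x^k(x+1)^k$, takes a $q$-th power to obtain $(bx^k(x+1)^k)^{q-1}=1$, and then has to argue separately that each root $y\in U_{q+1}$ of the quadratic yields at most one $x$ via the formula $x=(b'z+1)/(b'z^2+1)$, your bijection $x\mapsto(u,v)=(x^{q-1},(x+1)^{q-1})$ from $\fqtwo\setminus\fq$ onto $\{(u,v)\in(U_{q+1}\setminus\{1\})^2:u\neq v\}$ makes this injectivity automatic and the exclusion of $\fq$ transparent. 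Second, the paper never counts the $b$'s with $\delta_F(1,b)=2$ directly: it only establishes $\delta_F(1,b)\le 2$ for $b\notin\ftwo$, pins down $\delta_F(1,0)$ and $\delta_F(1,1)$, and then solves for $\omega_0,\omega_2$ from the identities~\eqref{identity-differential}. Your route instead computes $\omega_2$ explicitly as $q(q+1)/2-q\pm 1$ by counting the trace-satisfying $b$'s and subtracting the $q$ exceptional ones with $b+1\in U_{q+1}$; this is more work but yields finer information (e.g.\ it immediately recovers the fact, proved separately in Corollary~\ref{diff-p=2-lem1}, that these exceptional $b$'s have $\delta_F(1,b)=0$). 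Either bookkeeping style suffices here; yours is more constructive, the paper's slightly more economical.
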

\begin{proof}
For $b\in \fqtwo$, the derivative equation $\mathbb{D}_{1}F(x)=b$ is
\begin{eqnarray}\label{thm1-eq-1}
(x+1)^{k(q-1)}+x^{k(q-1)}=b.
\end{eqnarray}
Observe that $x=0$ and $x=1$ are the solutions of \eqref{thm1-eq-1} if and only if $b=1$. Thus we only need to consider $x \in \fqtwo \backslash \ftwo$ for \eqref{thm1-eq-1} in the following.

\textbf{Case 1}: $b=0$. In this case \eqref{thm1-eq-1} reduces to
\begin{eqnarray}\label{thm1-eq-2}
(1+x^{-1})^{k(q-1)}=1.
\end{eqnarray}
Note that $1+x^{-1}\not=0$ for $x \in \fqtwo \backslash \ftwo$ which implies $(1+x^{-1})^{q-1}\in U_{q+1}$. Thus by \eqref{thm1-eq-2} one has
\begin{eqnarray*}
(1+x^{-1})^{q-1}=1
\end{eqnarray*}
due to $\gcd(k,q+1)=1$.
This indicates that $\fq \backslash \ftwo$ is the solution set of \eqref{thm1-eq-2}.
That is to say, $\delta_F(1,0)=q-2$.

\textbf{Case 2}: $b\ne 0$. Multiplying $x^{k}(x+1)^{k}$ on both sides of \eqref{thm1-eq-1} gives
\begin{eqnarray}\label{thm1-eq-4}
(x+1)^{kq}x^{k}+x^{kq}(x+1)^{k}=bx^{k}(x+1)^{k}.
\end{eqnarray}
Taking $q$-th power on both sides of \eqref{thm1-eq-4} yields
\begin{eqnarray}\label{thm1-eq-5}
(x+1)^{k}x^{kq}+x^{k}(x+1)^{kq}=(bx^{k}(x+1)^{k})^{q}.
\end{eqnarray}
By \eqref{thm1-eq-4} and \eqref{thm1-eq-5}, one then obtains
\begin{eqnarray}\label{thm1-eq-9}
(bx^{k}(x+1)^{k})^{q-1}=1
\end{eqnarray}
which indicates
\begin{eqnarray}\label{thm1-eq-6}
(x+1)^{k(q-1)}= \frac{1}{b^{q-1}x^{k(q-1)}}.
\end{eqnarray}
Plugging \eqref{thm1-eq-6} into \eqref{thm1-eq-1} leads to
\begin{eqnarray}\label{thm1-eq-7}
\frac{1}{b^{q-1}x^{k(q-1)}}+x^{k(q-1)}=b.
\end{eqnarray}
By letting $y=x^{k(q-1)}\in U_{q+1}$ and multiplying $y$ on both sides of \eqref{thm1-eq-7}, one can further obtain
\begin{eqnarray}\label{thm1-eq-8}
y^2+by+\frac{1}{b^{q-1}}=0.
\end{eqnarray}
Clearly, \eqref{thm1-eq-8} has $0$ or $2$ solutions in $\fqtwo$ for $b\in \fqtwo^*$. Moreover, if \eqref{thm1-eq-8} has two solutions $y_{1}, y_{2} \in \fqtwo$, then $y_{1}\in U_{q+1}$ if and only if $y_{2}\in U_{q+1}$ due to $y_{1}y_{2}=\frac{1}{b^{q-1}} \in U_{q+1}$
Thus, by Lemma \ref{square-eq-p=2}, we conclude that \eqref{thm1-eq-8} has two solutions in $U_{q+1}$ if and only if $\Tr_{1}^{m}(\frac{1}{b^{q+1}})=1$.

Now suppose that $\Tr_{1}^{m}(\frac{1}{b^{q+1}})=1$ and let $y$ be a solution of \eqref{thm1-eq-8} in $U_{q+1}$. Since
$x^{k}$ permutes $U_{q+1}$, there exists a unique $z\in U_{q+1}$ such that $z^k=y$ and  a unique $b'\in U_{q+1}$ such that $b^{q-1}=b'^{k}$ respectively. Recalling $y=x^{k(q-1)}\in U_{q+1}$, one has $x^{q-1}=z$. Then from \eqref{thm1-eq-9} we have $(b'(x(x+1))^{q-1})^{k}=1$ and consequently $b'(x(x+1))^{q-1}=1$, which leads to $b' z(zx+1)=x+1$ due to $x^{q-1}=z$. Thus we have
$(b'z^2+1)x=b'z+1$. Now we claim that $b'z^2\ne 1$. Suppose that $b'z^2 = 1$. Then $b'^{k}z^{2k}=b^{q-1}y^{2}=1$. This together with \eqref{thm1-eq-8} gives $by=0$, a contradiction. Hence $x=\frac{b'z+1}{b'z^2+1}$.
Therefore, each solution $y\in U_{q+1}$ of \eqref{thm1-eq-8} provides at most one solution to \eqref{thm1-eq-1}, which means that \eqref{thm1-eq-1} has at most two solutions in $\fqtwo \backslash \ftwo$ for $b\ne 0$. Recall that $x\in \ftwo$ is the solution of \eqref{thm1-eq-1} only if $b=1$. Thus we have $\delta_F(1,b)\leq 2$ for $b\in \fqtwo \backslash \ftwo$, which implies that $F(x)$ is locally-APN.

To compute $\delta_F(1,1)$, we now consider whether \eqref{thm1-eq-8} provides two solutions to \eqref{thm1-eq-1} when $b=1$ or not. If $m$ is even, i.e., $\Tr_{1}^{m}(1)=0$, then \eqref{thm1-eq-8} has no solution in $U_{q+1}$ which means that \eqref{thm1-eq-1} only has the two solutions $x=0,1$, namely, $\delta_F(1,1)=2$.
If $m$ is odd, one can check that $w$ and $w^2$ are the solutions of \eqref{thm1-eq-1} due to $q\equiv -1 \pmod{3}$ and $\gcd(q+1,k)=1$, where $w$ is a primitive $3$rd root of unity over $\fqtwo$. That is to say, $\delta_F(1,1)=4$ if $m$ is odd.

With the above discussion, we can conclude that $\delta_F(1,0)=q-2$, $\delta_F(1,1) = 2$ (resp. $\delta_F(1,1) = 4$) and $\delta_F(1,b) \leq 2$ for $b\in \fqtwo \backslash \ftwo$ if $m$ is even (resp. $m$ is odd).
This together with \eqref{identity-differential} gives the differential spectrum of $F(x)$.
\end{proof}

The following result is useful to compute the boomerang uniformity of $F(x)=x^{k(q-1)}$.
\begin{cor} \label{diff-p=2-lem1}
Let $F(x)=x^{k(q-1)}$ be a power mapping over $\fqtwo$, where $q=2^m$ and $m$, $k$ are positive integers with $\gcd(k,q+1)=1$. Then $\delta_F(1,w)=\delta_F(1,w^2)=0$, where $w$ is a primitive $3$rd root of unity over $\fqtwo$.
\end{cor}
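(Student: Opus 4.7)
The plan is to reuse the reduction of Case~2 in the proof of Theorem~\ref{differential-p=2} with $b = w$ and trace through to show that neither root of the associated quadratic produces a solution in $\fqtwo \setminus \ftwo$. Since $F$ has coefficients in $\ftwo$ and $w, w^2$ are Galois conjugate over $\ftwo$, the equality $\delta_F(1,w) = \delta_F(1,w^2)$ is automatic, so handling $b = w$ suffices.

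The contribution from $x \in \ftwo$ is immediate: $\mathbb{D}_1 F(0) = \mathbb{D}_1 F(1) = 1 \neq w$. For $x \in \fqtwo \setminus \ftwo$, the derivation in the proof of Theorem~\ref{differential-p=2} forces $y := x^{k(q-1)} \in U_{q+1}$ to satisfy \eqref{thm1-eq-8} with $b = w$. In the case of odd $m$ (where the corollary feeds the subsequent boomerang analysis), $q \equiv 2 \pmod 3$ and $w^{q-1} = w$, so the equation reads $y^2 + wy + w^2 = 0$; from $1 + w + w^2 = 0$ its roots are exactly $y = 1$ and $y = w^2$, both lying in $U_{q+1}$.

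The key step is then to feed these two values of $y$ into the reconstruction $x = (b'z+1)/(b'z^2+1)$ from that proof, where $z, b' \in U_{q+1}$ are uniquely determined by $z^k = y$ and $b'^k = w^{q-1} = w$. Because $\gcd(k, q+1) = 1$ and $3 \mid q+1$, both $b'$ and $z$ are forced into $\langle w\rangle$: writing $k' \equiv k^{-1} \pmod 3$ with $k' \in \{1, 2\}$, one has $b' = w^{k'}$ and $z = w^{2k'}$ (exponents read mod $3$). The root $y = 1$ gives $z = 1$, so the formula outputs $x = 1 \in \ftwo$, excluded. The root $y = w^2$ produces the collapse $b'z = w^{3k'} = 1$, which kills the numerator $b'z + 1$ while the denominator $b'z^2 + 1 = z + 1$ stays nonzero (since $z \neq 1$), so the formula outputs $x = 0 \in \ftwo$, again excluded.

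The genuinely delicate observation is the identity $b'z = 1$ in the $y = w^2$ branch---it is what makes the candidate solution collapse onto $x = 0$; after spotting this, the remaining verifications are direct. With every would-be solution in $\fqtwo \setminus \ftwo$ forced back into $\ftwo$, and $\ftwo$ contributing nothing, one concludes $\delta_F(1, w) = \delta_F(1, w^2) = 0$.
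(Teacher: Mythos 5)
Your proof is correct and follows essentially the same route as the paper's: both arguments feed the roots of the quadratic \eqref{thm1-eq-8} back through the reconstruction $x=\frac{b'z+1}{b'z^2+1}$ from the proof of Theorem \ref{differential-p=2} and check that every candidate lands in $\ftwo$. The only real difference is the endgame: the paper eliminates just the root $y=1$ (getting $\delta_F(1,b)\le 1$) and then invokes the evenness of $\delta_F(1,b)$ in characteristic $2$ to conclude $\delta_F(1,b)=0$, whereas you locate the second root $y=w^2$ explicitly and show it collapses to $x=0$ via the identity $b'z=w^{3k'}=1$ --- a valid, slightly longer finish that dispenses with the parity argument. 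One caveat, which you share with the paper rather than introduce: your computation (like the paper's opening observation that $y=1$ solves \eqref{thm1-eq-8}, which requires $1/w^{q-1}=w^2$, i.e.\ $q\equiv 2\pmod 3$) only covers odd $m$, and indeed the corollary as stated fails for even $m$: for $q=4$, $k=1$ one has $(x+1)^3+x^3=x^2+x+1$, and $x^2+x+w^2=0$ is solvable in $\mathbb{F}_{16}$ since $\Tr_{\mathbb{F}_{16}/\ftwo}(w^2)=0$, so $\delta_F(1,w)=2$ there. Since the corollary is only invoked in the odd-$m$ branch of Theorem \ref{boomerang-p=2}, this is a defect of the statement's hypotheses rather than of your argument, but it would be worth making the restriction to odd $m$ explicit.
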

\begin{proof}
Observe that $y=1$ is a solution of \eqref{thm1-eq-8} when $b\in \{w, w^2\}$. According to the proof of Theorem \ref{differential-p=2}, $y=1$ may provide the solution $x=\frac{b'z+1}{b'z^2+1}$ to \eqref{thm1-eq-1}, where $b'$ and $z$ are defined as in the proof of Theorem \ref{differential-p=2}. Clearly, $z=1$ for $y=1$, which indicates $x=1$. However, $x=1$ is not the solution of \eqref{thm1-eq-1} for $b=w \mbox{ or } w^2$, namely, the solution $y=1$ of \eqref{thm1-eq-8} does not contribute solutions to \eqref{thm1-eq-1}. Thus we have $\delta_F(1,b)<2$ for $b\in \{w, w^2\}$ by the proof of Theorem \ref{differential-p=2}, and consequently $\delta_F(1,w)=\delta_F(1,w^2)=0$ since $\delta_F(1,b)$ is even for any $b\in \fqtwo$ in the case $p=2$.
\end{proof}

The boomerang spectrum of $F(x)$ can be determined from its differential spectrum as below.

\begin{thm}\label{boomerang-p=2}
Let $F(x)=x^{k(q-1)}$ be a power mapping over $\fqtwo$, where $q=2^m$ and $m$, $k$ are positive integers with $\gcd(k,q+1)=1$. Then the boomerang spectrum of $F(x)$ is
\[\mathbb{BS}_{F}=\{\nu_{0}=2^{2m-1}+2^{m-1}-2, \nu_{2}=2^{2m-1}-2^{m-1}+1\}\]
if $m$ is even; and for odd $m$, its boomerang spectrum is given by
\[\mathbb{BS}_{F}=\{\nu_{0}=2^{2m-1}+2^{m-1}-3, \nu_{2}=2^{2m-1}-2^{m-1}-1, \nu_{4}=3 \}.\]
\end{thm}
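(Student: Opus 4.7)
The plan is to reduce the boomerang count for $F$ to data already contained in the differential spectrum established in Theorem~\ref{differential-p=2}. Since $F$ is a power mapping, it suffices to compute $\beta_F(1,b)$ for each $b\in\fqtwo^{*}$. Adding the two defining equations $F(x)+F(y)=b$ and $F(x+1)+F(y+1)=b$ (in characteristic two) forces $\mathbb{D}_1F(x)=\mathbb{D}_1F(y)$; writing this common value as $c$ and $S_c=\{x\in\fqtwo:\mathbb{D}_1F(x)=c\}$ (so $|S_c|=\delta_F(1,c)$), one obtains
\[\beta_F(1,b)=\sum_{c\in\fqtwo}\bigl|\{(x,y)\in S_c\times S_c : F(x)+F(y)=b\}\bigr|.\]
The proof is then a bookkeeping exercise, determining the inner cardinality for each $c$.

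For $c=0$, the proof of Theorem~\ref{differential-p=2} gives $S_0=\fq\setminus\ftwo$ and $F\equiv 1$ on $S_0$, so $F(x)+F(y)=0$ throughout and the contribution to $\beta_F(1,b)$ vanishes when $b\neq 0$. For $c\notin\{0,1\}$ with $\delta_F(1,c)=2$, the key observation $\mathbb{D}_1F(x+1)=\mathbb{D}_1F(x)$ (valid in characteristic two) forces $S_c=\{x_0,x_0+1\}$ for some $x_0$; the four pair-sums are $\{0,c,c,0\}$, contributing $2$ to $\beta_F(1,b)$ precisely when $b=c$ and $0$ otherwise. The case $c=1$ must be handled by direct enumeration of $S_1$. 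When $m$ is even, $S_1=\{0,1\}$ yields the pair-sums $\{0,1,1,0\}$, contributing $2$ to $\beta_F(1,1)$ and nothing elsewhere. When $m$ is odd, $S_1=\{0,1,w,w^2\}=\mathbb{F}_4$ (with $w$ a primitive cube root of unity); since $q-1\equiv 1\pmod 3$ and $\gcd(k,3)=1$, the map $F$ permutes $\mathbb{F}_4^{*}$, so as $(x,y)$ ranges over $S_1\times S_1$ the sum $F(x)+F(y)$ hits every element of $\mathbb{F}_4$ exactly four times. Thus the $c=1$ contribution is $4$ to $\beta_F(1,b)$ for every $b\in\mathbb{F}_4^{*}$ and $0$ otherwise.

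Finally I collect the contributions for $b\in\fqtwo^{*}$. For even $m$, one obtains $\beta_F(1,b)=2$ iff $\delta_F(1,b)=2$ (including $b=1$) and $\beta_F(1,b)=0$ otherwise, giving $\nu_2=\omega_2$ and $\nu_0=\omega_0$. For odd $m$, one gets $\beta_F(1,b)=4$ for $b\in\{1,w,w^2\}$, while for $b\notin\{0,1,w,w^2\}$ one has $\beta_F(1,b)=2$ if $\delta_F(1,b)=2$ and $\beta_F(1,b)=0$ otherwise. Here Corollary~\ref{diff-p=2-lem1} is crucial: it ensures $\delta_F(1,w)=\delta_F(1,w^2)=0$, so that $b=w,w^2$ do not receive an additional $+2$ from the $c=b$ term (which would spoil the value $4$). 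Consequently $\nu_4=3$, $\nu_2=\omega_2$, and $\nu_0=\omega_0-2$, and substituting the values of $\omega_0,\omega_2$ from Theorem~\ref{differential-p=2} produces the stated spectrum. The main obstacle is the odd-$m$ analysis at $c=1$: one must enumerate the $\mathbb{F}_4$ pair-sums cleanly and then invoke Corollary~\ref{diff-p=2-lem1} to rule out any double contribution at $b\in\{w,w^2\}$.
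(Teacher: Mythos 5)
Your proposal is correct and follows essentially the same route as the paper: both decompose the boomerang count by the common value $c$ of $\mathbb{D}_1F(x)=\mathbb{D}_1F(y)$, read off the contribution of each fiber $S_c$ from the differential spectrum of Theorem~\ref{differential-p=2}, handle $c=1$ (the $\mathbb{F}_4$ fiber) separately for odd $m$, and invoke Corollary~\ref{diff-p=2-lem1} to ensure $w,w^2\notin\Omega_2$ so that $\beta_F(1,w)=\beta_F(1,w^2)=4$ exactly. Your observation that $F$ permutes $\mathbb{F}_4$ gives a slightly cleaner count of the pair-sums than the paper's direct verification, but the argument is otherwise identical.
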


\begin{proof}
According to the definition of the boomerang spectrum of $F(x)$, we need to compute the number $\beta_{F}(1,b)$ of solutions $(x,y)$ of the system of equations
\begin{eqnarray} \label{thm3-eq-1}
\left\{\begin{array}{ll}
x^{k(q-1)}-y^{k(q-1)}=b,  \\
(x+1)^{k(q-1)}-(y+1)^{k(q-1)}=b
\end{array} \right.
\end{eqnarray}
as $b$ runs through $\fqtwo^*$. Clearly, $x\ne y$ in \eqref{thm3-eq-1}.
Note that \eqref{thm3-eq-1} is equivalent to
\begin{eqnarray}\label{thm3-eq-2}
\left\{\begin{array}{ll}
x^{k(q-1)}-y^{k(q-1)}=b,\\
\mathbb{D}_{1}(x)=\mathbb{D}_{1}(y),
\end{array} \right.
\end{eqnarray}
where $\mathbb{D}_{1}(x)=(x+1)^{k(q-1)}-x^{k(q-1)}$.

Let $\mathbb{D}_{1}(x)=\mathbb{D}_{1}(y)=c$ for some $c\in \fqtwo$, we consider \eqref{thm3-eq-2} as follows:

\textbf{Case 1}: $m$ is even. In this case, according to Theorem \ref{differential-p=2}, $\mathbb{D}_{1}(x)=c$ has either $0$, $2$ or $q-2$ solutions as $c$ runs through $\fqtwo$. For $i=0, 2, q-2$, define
\[\Omega_{i}=\{c\in \fqtwo: \delta_F(1,c) = i\}.\]

\textbf{Case 1.1}: $c\in \Omega_{0}$. That is, $\mathbb{D}_{1}(x)=c$ has no solution in $\fqtwo$. Hence \eqref{thm3-eq-2} has no solution for any $b\in \fqtwo^*$.

\textbf{Case 1.2}: $c\in \Omega_{q-2}$. This case happens only if $c=0$. According to Case 1 of the proof of Theorem \ref{differential-p=2}, $\mathbb{D}_{1}(x)=\mathbb{D}_{1}(y)=c$ holds for $x,y \in \fq\backslash \ftwo$. This leads to $x^{k(q-1)}-y^{k(q-1)}=0$. Hence \eqref{thm3-eq-2} has no solution for any $b\in \fqtwo^*$.

\textbf{Case 1.3}: $c\in \Omega_{2}$. For this case, $\mathbb{D}_{1}(x)=c$ has exactly two solutions $x_{0}$ and $x_{0}+1$ in $\fqtwo$ for a fixed $c\in \Omega_{2}$. Therefore, \eqref{thm3-eq-2} has exactly two solutions $(x_{0},x_{0}+1)$ and $(x_{0}+1, x_{0})$ for $b=c\in \Omega_{2}$.

Combining Cases 1.1, 1.2 and 1.3, we conclude that \eqref{thm3-eq-2} has $0$ or $2$ solutions for any $b\in \fqtwo^*$, and $\nu_{2}=|\Omega_{2}|=\omega_{2}=2^{2m-1}-2^{m-1}+1$ by Theorem \ref{differential-p=2}. Then the boomerang spectrum of $F(x)$ follows from $\nu_{0}+\nu_{2}=2^{2m}-1$.

\textbf{Case 2}: $m$ is odd. According to Theorem \ref{differential-p=2}, when $m$ is odd, $\mathbb{D}_{1}(x)=c$ has either $0$, $2$, $4$ or $q-2$ solutions. For $i=0, 2, 4, q-2$, define
\[\Omega_{i}=\{c\in \fqtwo: \delta_F(1,c) = i\}.\]

\textbf{Case 2.1}: $c\in \Omega_{0}$ or $c\in \Omega_{q-2}$. Similar to the discussion in Cases 1.1 and 1.2, in this case \eqref{thm3-eq-2} has no solution for any $b\in \fqtwo^*$.

\textbf{Case 2.2}: $c\in \Omega_{2}$. When this case occurs, $\mathbb{D}_{1}(x)=c$ has exactly two solutions $x_{0}$ and $x_{0}+1$ in $\fqtwo$ for a fixed $c\in \Omega_{2}$. Consequently, this provides exactly two solution $(x_{0},x_{0}+1)$ and $(x_{0}+1, x_{0})$ to \eqref{thm3-eq-2} for $b=c\in \Omega_{2}$.

\textbf{Case 2.3}: $c\in \Omega_{4}$. That is, $c=1$. In this case, according to the proof of Theorem \ref{differential-p=2}, $\{0,1,w,w^2\}$ is the solution set of $\mathbb{D}_{1}(x)=c$, where $w$ is a primitive $3$rd root of unity over $\fqtwo$. Note that $(x,y)$ is a solution of
\eqref{thm3-eq-2} for $b=x^{k(q-1)}-y^{k(q-1)}$ when $(x,y)\in \{(x,y): x,y \in \{0,1,w,w^2\} \mbox{ and } x\ne y\}$. Then it can be readily verified that this provides four solutions $(x,y)$ to \eqref{thm3-eq-2} for each $b\in \{1,w,w^2\}$.

With the above discussion and the fact that $1,w,w^2 \notin \Omega_{2}$ by Corollary \ref{diff-p=2-lem1}, we conclude that $\beta_{F}(1,b)$ equals $0$, $2$ or $4$. Further, we have $\nu_{2}=|\Omega_{2}|=\omega_{2}=2^{2m-1}-2^{m-1}-1$ and  $\nu_{4}=3$.
Then  the boomerang spectrum of $F(x)$ follows from $\nu_{0}+\nu_{2}+\nu_{4}=2^{2m}-1$.

This completes the proof.
\end{proof}

\subsection{The differential spectrum and boomerang spectrum of $x^{k(p^m-1)}$ for $p>2$}

\begin{thm}\label{differential-p>2}
Let $F(x)=x^{k(q-1)}$ be a power mapping over $\fqtwo$, where $q=p^m$, $p$ is an odd prime and $m$, $k$ are positive integers with $\gcd(k,q+1)=1$.
Then $F(x)$ is locally-APN with the differential spectrum
\[\mathbb{DS}_{F}=\{\omega_{0}=\frac{q^2-1}{2}-(q-1), \omega_{1}=3(q-1), \omega_{2}=\frac{q^2-1}{2}-2q, \omega_{3}=2, \omega_{q-2}=1\}\]
if $q\equiv 2\pmod{3}$; and otherwise
\[\mathbb{DS}_{F}=\{\omega_{0}=\frac{q^2-1}{2}-(q+1), \omega_{1}=3q-1, \omega_{2}=\frac{q^2-1}{2}-2(q-1), \omega_{q-2}=1\}.\]
\end{thm}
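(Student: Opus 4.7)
The plan is to parallel the proof of Theorem~\ref{differential-p=2} with the modifications forced by odd characteristic. First I would handle the boundary cases: $x=0$ forces $b=1$ while $x=-1$ forces $b=-1$, so these two values now lie in distinct fibres (unlike char~$2$, where they coalesce). For $b=0$, the equation $(x+1)^{k(q-1)}=x^{k(q-1)}$ combined with $\gcd(k,q+1)=1$ reduces to $((x+1)/x)^{q-1}=1$, whose solution set is exactly $\fq\setminus\{0,-1\}$; hence $\delta_F(1,0)=q-2$ and $\omega_{q-2}=1$.

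For $b\in\fqtwo^*$ and $x\notin\{0,-1\}$, multiplying the derivative equation by $x^k(x+1)^k$ and then taking $q$-th powers yields $(bx^k(x+1)^k)^{q-1}=-1$; the sign flip compared with the char-$2$ derivation is the key odd-characteristic feature. Setting $y=x^{k(q-1)}\in U_{q+1}$ and $(x+1)^{k(q-1)}=y+b\in U_{q+1}$ produces the same quadratic
\[
y^2+by+\frac{1}{b^{q-1}}=0
\]
as in Theorem~\ref{differential-p=2}. To recover $x$ from a root $y$, pick the unique $z\in U_{q+1}$ with $z^k=y$ and $b'\in U_{q+1}$ with $b'^k=b^{q-1}$. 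Since $q+1$ is even, the hypothesis $\gcd(k,q+1)=1$ forces $k$ to be odd, so the unique $\eta\in U_{q+1}$ with $\eta^k=-1$ is $\eta=-1$; the identity $b'(x(x+1))^{q-1}=-1$ combined with $x^{q-1}=z$ then yields the explicit formula
\[
x=-\frac{1+b'z}{1+b'z^2}.
\]
A short check shows the denominator is nonzero (else the quadratic forces $by=0$) and that this $x$ always satisfies the derivative equation; so each $U_{q+1}$-root of the quadratic produces exactly one candidate $x$, which is a genuine solution iff $x\ne 0,-1$.

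By Lemma~\ref{square-eq-p>2} the quadratic has $2$, $1$, or $0$ roots in $U_{q+1}$ according as $\theta=(b^{q+1}-4)/b^{q+1}$ is a nonsquare, zero, or a nonzero square of $\fq^*$, giving $(q^2-1)/2$, $q+1$, and $(q-3)(q+1)/2$ values of $b$ respectively. Direct substitution shows $x=0$ iff the root equals $y=-1/b^{q-1}$, equivalently (by Vieta) $-1$ is the other root, equivalently $b^{q-1}(b-1)=1$; dually $x=-1$ iff $y=1$ is a root, equivalently $b^{q-1}(b+1)=-1$. Parameterising $u=b^{q-1}\in U_{q+1}\setminus\{-1\}$ gives $b=(1+u)/u$ in the first case and $b=-(1+u)/u$ in the second: each such set has exactly $q$ elements, meets $\{b^{q+1}=4\}$ only at $b=2$ (resp.\ $b=-2$, both corresponding to $u=1$), and elsewhere lies in the $\theta$-nonsquare locus because a direct calculation gives $\theta=((u-1)/(u+1))^2$ with $v=(u-1)/(u+1)$ satisfying $v^q=-v$, hence $v^{q-1}=-1$. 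The two sets are disjoint from each other (a common $b$ would force $b=0$) and from $\{\pm1\}$.

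Lastly I would treat $b=\pm1$. Both have $\theta=-3$: by Lemma~\ref{Issquare(-3)} together with a direct check in the $p=3$ case (where $-3=0$), $\theta$ is a nonsquare of $\fq^*$ iff $q\equiv 2\pmod 3$, in which case both quadratic roots lie in $U_{q+1}$ and yield valid candidates, so $\delta_F(1,\pm1)=3$; otherwise no valid quadratic candidate arises and $\delta_F(1,\pm1)=1$. Collecting, the three disjoint sets $\{b^{q+1}=4,\ b\ne\pm2\}$, $\{b^{q-1}(b-1)=1,\ b\ne2\}$, and $\{b^{q-1}(b+1)=-1,\ b\ne-2\}$ each contribute $q-1$ values of $b$ with $\delta_F(1,b)=1$; adding the two extra contributions from $b=\pm1$ when $q\not\equiv 2\pmod 3$ gives $\omega_1=3(q-1)$ or $\omega_1=3q-1$ accordingly, while $\omega_3=2$ precisely when $q\equiv 2\pmod 3$. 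The identities in~\eqref{identity-differential} then determine $\omega_0$ and $\omega_2$ and match the stated formulas. Since $\delta_F(1,b)\le 2$ for every $b\in\fqtwo\setminus\{0,\pm1\}\supseteq\fqtwo\setminus\fp$, the locally-APN property follows. The main obstacle is this last bookkeeping: identifying precisely when the candidate $x=-(1+b'z)/(1+b'z^2)$ degenerates to $0$ or $-1$ requires both parameterisations of $b$ above together with the clean identity $\theta=((u-1)/(u+1))^2$, and this accounting is the essential new ingredient compared to the char-$2$ argument.
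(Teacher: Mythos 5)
Your proposal is correct. Its core---the reduction of $\mathbb{D}_1F(x)=b$ to the quadratic $y^2+by+b^{1-q}=0$ on $U_{q+1}$, the recovery formula $x=-(1+b'z)/(1+b'z^2)$ with the check that $b'z^2+1\ne 0$, the appeal to Lemma \ref{square-eq-p>2} and Lemma \ref{Issquare(-3)}, and the separate treatment of $b=0$ and $b=\pm1$---coincides with the paper's proof. Where you genuinely diverge is the endgame counting. The paper only proves that \emph{at least one} of the two candidates $x_1,x_2$ survives when $\theta$ is a nonsquare (its Statement 2); that is enough to determine $\omega_0$ exactly, and then $\omega_1,\omega_2$ are forced by the two identities \eqref{identity-differential} once $\omega_3$ and $\omega_{q-2}$ are known. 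You instead compute $\omega_1$ directly: you characterize exactly when a candidate degenerates to $0$ or $-1$ (the loci $b^{q-1}(b-1)=1$ and $b^{q-1}(b+1)=-1$), parameterize each by $u=b^{q-1}\in U_{q+1}$, and use the identity $\theta=\bigl((1-u)/(1+u)\bigr)^2$ together with $\bigl((1-u)/(1+u)\bigr)^{q-1}=-1$ to place these $b$ inside the two-root locus. This pins down the set $\{b:\delta_F(1,b)=1\}$ explicitly, which the paper never does, and would let you dispense with the moment identities altogether; the cost is the extra parameterization computation, plus a small amount of care in characteristic $3$, where $2=-1$ and $-2=1$, so that your disjointness claim for the three $\delta_F=1$ sets and $\{\pm1\}$ holds only after the exclusions $b\ne\pm2$ (which, as you set things up, it does). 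Both routes are sound and yield the stated spectrum.
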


\begin{proof}
For $b\in \fqtwo$, the derivative equation $\mathbb{D}_{1}F(x)=b$ is
\begin{eqnarray}\label{thm2-eq-1}
(x+1)^{k(q-1)}-x^{k(q-1)}=b.
\end{eqnarray}
Observe that $x=0$ (resp. $x=-1$) is a solution of \eqref{thm2-eq-1} if and only if $b=1$ (resp. $b=-1$). Thus it suffices to consider $x \in \fqtwo \backslash\{0,-1\}$ for \eqref{thm2-eq-1}.

\textbf{Case 1}: $b=0$. In this case, \eqref{thm2-eq-1} reduces to
\begin{eqnarray}\label{thm2-eq-2}
(1+x^{-1})^{k(q-1)}=1.
\end{eqnarray}
Note that $1+x^{-1}\not=0$ for $x \in \fqtwo \backslash \{0,-1\}$. Thus by \eqref{thm2-eq-2} one has
$(1+x^{-1})^{q-1}=1$
since $x^{k}$ permutes $U_{q+1}$ due to $\gcd(k,q+1)=1$.
Then one can verify that $\fq \backslash \{0,-1\}$ is the solution set of \eqref{thm2-eq-2}.
That is to say, $\delta_F(1,0)=q-2$.

\textbf{Case 2}: $b\ne 0$. Notice that $x(x+1)\ne 0$ for $x \in \fqtwo \backslash \{0,-1\}$. Multiplying $x^k(x+1)^k$ on both sides of \eqref{thm2-eq-1} gives
\begin{eqnarray}\label{thm2-eq-3}
(x+1)^{kq}x^{k}-x^{kq}(x+1)^{k}=bx^{k}(x+1)^{k}.
\end{eqnarray}
Taking $q$-th power on both sides of \eqref{thm2-eq-3} yields
\begin{eqnarray}\label{thm2-eq-4}
(x+1)^{k}x^{kq}-x^{k}(x+1)^{kq}=(bx^{k}(x+1)^{k})^{q}.
\end{eqnarray}
Then, by \eqref{thm2-eq-3} and \eqref{thm2-eq-4}, one obtains
\begin{eqnarray}\label{thm2-eq-5}
(bx^{k}(x+1)^{k})^{q-1}=-1
\end{eqnarray}
which implies
\begin{eqnarray}\label{thm2-eq-6}
(x+1)^{k(q-1)}=-\frac{1}{(bx^{k})^{q-1}}.
\end{eqnarray}
Eliminating the term $(x+1)^{k(q-1)}$ from \eqref{thm2-eq-1} and \eqref{thm2-eq-6} leads to
\begin{eqnarray}\label{thm2-eq-12}
\frac{1}{(bx^{k})^{q-1}}+x^{k(q-1)}+b=0.
\end{eqnarray}
By letting $y=x^{k(q-1)} \in U_{q+1}$ and multiplying $y$ on both sides of \eqref{thm2-eq-12}, one gets
\begin{eqnarray}\label{thm2-eq-7}
y^2+by+\frac{1}{b^{q-1}}=0.
\end{eqnarray}

Let $\Delta=\frac{b^{q+1}-4}{b^{q-1}}$ be the discriminant of the quadratic equation \eqref{thm2-eq-7}. Clearly, \eqref{thm2-eq-7} has the unique solution $y=-b/2$ in $\fqtwo$ only if $\Delta=0$, i.e., $b^{q+1}=4$. Note that $-b/2\in U_{q+1}$ if $b^{q+1}=4$. When $\Delta\ne0$, then it is a square of $\fqtwo^*$ due to $b^{q+1}-4\in\fq$, and thus \eqref{thm2-eq-7} has two solutions $y_{1}$ and $y_{2}$ in $\fqtwo$. Observe that $y_{1}y_{2}=\frac{1}{b^{q-1}} \in U_{q+1}$ which implies either both $y_{1}$ and $y_{2}$ are in $U_{q+1}$ or neither $y_{1}$ nor $y_{2}$ is in $U_{q+1}$. Hence \eqref{thm2-eq-7} has a unique solution in $U_{q+1}$ if and only if $b^{q+1}=4$ and the solution is $y=-b/2$. By Lemma \ref{square-eq-p>2}, \eqref{thm2-eq-7} has two solutions in $U_{q+1}$ if and only if $\theta:=\frac{b^{q+1}-4}{b^{q+1}}$ is a nonsquare of $\fq^*$. Further, its two solutions are $y_{1}=b\frac{-1+\sqrt{\theta}}{2}$ and $y_{2}=b\frac{-1-\sqrt{\theta}}{2}$.

Suppose that $y \in U_{q+1}$ is a solution of \eqref{thm2-eq-7}. Then there exists a unique $z\in U_{q+1}$ such that $z^k=y$ since $x^k$ permutes $U_{q+1}$ due to $\gcd(k,q+1)=1$. Thus one has $x^{q-1}=z$ due to $y=x^{k(q-1)}$. Let $b'$ be the unique element in $U_{q+1}$ such that $b^{q-1}=b'^k$. Then by \eqref{thm2-eq-5}, it gives
\begin{eqnarray}\label{thm2-eq-8}
b'z(x+1)^{q-1}=-1
\end{eqnarray}
since $x^k$ permutes $U_{q+1}$ and $k$ is odd due to $\gcd(k,q+1)=1$. It follows from \eqref{thm2-eq-8} and $x^{q-1}=z$ that \begin{eqnarray}\label{thm2-eq-9}
x=-\frac{b'z+1}{b'z^2+1}.
\end{eqnarray}
Here we claim that $b'z^2+1 \ne 0$. Suppose that $b'z^2 = -1$. Taking $k$-th power on both sides of this equation gives $b^{q-1}y^2=-1$. By \eqref{thm2-eq-7}, it leads to $b^qy=0$, a contradiction.
This shows that a solution $y \in U_{q+1}$ of \eqref{thm2-eq-7} provides at most one solution to \eqref{thm2-eq-1}. Therefore \eqref{thm2-eq-1} has at most two solutions in $\fqtwo \backslash \{0,-1\}$ for $b\ne 0$. Recall that $x=0$ (resp. $x=-1$) is the solution of \eqref{thm2-eq-1} only if $b=1$ (resp. $b=-1$). Thus $\delta_F(1,b)\leq 2$ for $b\in \fqtwo \backslash \{0,\pm 1\}$, which implies that $F(x)$ is locally-APN.

Next we discuss the solutions of \eqref{thm2-eq-1} in detail in order to obtain the differential spectrum of $F(x)$. To do this, we first claim the following two statements:\\
\textbf{Statement 1)}. When $b^{q+1} = 4$, then \eqref{thm2-eq-1} has exactly one solution $x=-\frac{b'z+1}{b'z^2+1}$ in $\fqtwo \backslash \{0,-1\}$ if $b\ne \pm 2$ and has no solution in $\fqtwo \backslash \{0,-1\}$ if $b = \pm 2$, where $b^{q-1}=b'^k$ and $z^k=y=-b/2$.\\
\textbf{Statement 2)}. If $\theta=\frac{b^{q+1}-4}{b^{q+1}}$ is a nonsquare of $\fq^*$, then \eqref{thm2-eq-1} has at least one solution $x_{1}=-\frac{b'z_{1}+1}{b'z_{1}^2+1}$ or $x_{2}=-\frac{b'z_{2}+1}{b'z_{2}^2+1}$ in $\fqtwo \backslash \{0,-1\}$, where $b^{q-1}=b'^k$, $z_{1}^k=y_{1}=b\frac{-1+\sqrt{\theta}}{2} $ and $z_{2}^k=y_{2}=b\frac{-1-\sqrt{\theta}}{2} $.

Statements 1) and 2) can be verified as follows. When $b^{q+1} = 4$, one has that $b'z+1=0$ if and only if $b=2$ and $b'z+1 = b'z^2+1$ if and only if $z=1$, i.e., $b=-2$. Thus \eqref{thm2-eq-1} has no solution in  $ \fqtwo \backslash \{0,-1\}$ if $b=\pm 2$.
Recall that $b'z^2+1 \ne 0$. Hence $(z-1)(b'z+1)(b'z^2+1)\ne 0$ if $b\ne \pm 2$. Now we focus on the case $b \ne \pm 2$ when $b^{q+1} = 4$. Using $x=-\frac{b'z+1}{b'z^2+1}$, a direct computation gives
\begin{eqnarray}\label{thm2-eq-10}
x^{q-1}=\frac{(b'z+1)(b'z^2+1)}{(b'z+1)(b'z^2+1)}z=z
\end{eqnarray}
and
\begin{eqnarray}\label{thm2-eq-11}
b'z(x+1)^{q-1}=-\frac{(z-1)(b'z^2+1)}{(z-1)(b'z^2+1)}=-1.
\end{eqnarray}
Then by \eqref{thm2-eq-10} and \eqref{thm2-eq-11} we have
\[(x+1)^{k(q-1)}-x^{k(q-1)}=(-\frac{1}{b'z})^{k}-z^k=-\frac{1}{b^{q-1}y}-y=b\]
due to $y=-b/2$, $b^{q-1}=b'^k$ and $b^{q+1} = 4$. Thus $x=-\frac{b'z+1}{b'z^2+1}$ is a solution of \eqref{thm2-eq-1} in  $ \fqtwo \backslash \{0,-1\}$ if $b^{q+1} = 4$ and $b \ne \pm 2$, and it is the unique solution. This proves statement 1).

For the second statement, we first show that at least one of $x_{1}=-\frac{b'z_{1}+1}{b'z_{1}^2+1}$ and $x_{2}=-\frac{b'z_{2}+1}{b'z_{2}^2+1}$ is not in $\{0,-1\}$. Without loss of generality, we only need to prove that $x_{2}\notin \{0,-1\}$ if $x_{1}\in \{0,-1\}$.
Notice that $x_{i}=0$, i.e., $b'z_{i}=-1$, implies $(b'z_{i})^{k}=b^{q-1}y_{i}=-1$ for $i\in\{1,2\}$.
If $x_{1}=x_{2}=0$, then $z_{1}=z_{2}$, a contradiction. Suppose that $x_{1} =0$, $x_{2} = -1$. Then we have $z_{2}=1$ and thus $y_{2}=1$, which implies $y_{1}=\frac{1}{b^{q-1}}$ due to $y_{1}y_{2}=\frac{1}{b^{q-1}}$. This means that $b^{q-1}y_{1}=1$, which is a contradiction with $b^{q-1}y_{1}=-1$ due to $x_{1} =0$. This proves that $x_{2}\ne 0,-1$ if $x_{1}=0$. Similarly, we can prove that $x_{2}\ne 0,-1$ if $x_{1}=-1$. Thus at least one of $x_{1}$ and $x_{2}$ is not in $\{0,-1\}$.
Without loss of generality, assume that $x_{1}\notin \{0,-1\}$, which implies $(z_{1}-1)(b'z_{1}+1)\ne 0$. Similar to the computation in the proof of statement 1), a direct calculation gives
\[(x_{1}+1)^{k(q-1)}-x_{1}^{k(q-1)}=-\frac{1}{b^{q-1}y_{1}}-y_{1}=b\]
due to $x_{1}=-\frac{b'z_{1}+1}{b'z_{1}^2+1}$ and $y_{1}=b\frac{-1+\sqrt{\theta}}{2}$. Thus $x_{1}$ is a solution of \eqref{thm2-eq-1}. This proves statement 2).

Combining Cases 1 and 2 we conclude that $\delta_F(1,0)=q-2$, $\delta_F(1,1)\leq 3$, $\delta_F(1,-1) \leq 3$ and $\delta_F(1,b)\leq 2$ for $b\in \fqtwo \backslash \{0,\pm 1\}$, which implies $\omega_{q-2}=1$, and $\delta_F(1,b)>0$ if and only if 1) $b=0$; or 2) $b=\pm 1$; or 3) $b^{q+1}=4$ and $b\ne \pm 2$; or 4) $\theta=\frac{b^{q+1}-4}{b^{q+1}}$ is a nonsquare of $\fq^*$. Notice that  $\theta=-3$ if $b=\pm 1$, where $-3$ is a nonsquare of $\fq^*$ if and only if $q\equiv 2\pmod{3}$ by Lemma \ref{Issquare(-3)}. Thus it gives
\begin{eqnarray*}
\omega_{0}=
\left\{\begin{array}{ll}
\frac{q^2-1}{2}-(q-1), & \mbox{if $q\equiv 2\pmod{3}$}; \\
\frac{q^2-1}{2}-(q+1),&\mbox{otherwise}.
\end{array} \right.
\end{eqnarray*}

Now we determine the values of $\delta_F(1,1)$ and $\delta_F(1,-1)$. Note that $\delta_F(1,1)=\delta_F(1,-1)$ since $\mathbb{D}_{1}F(x)=b$ has a solution $x$ if and only if $\mathbb{D}_{1}F(x)=-b$ has a solution $-x-1$, which indicates $\delta_F(1,b)=\delta_F(1,-b)$ for any $b\in \fqtwo^*$. Thus we only need to consider $\delta_F(1,1)$ in the following. Note that $x=0$ is always a solution of \eqref{thm2-eq-1} for $b=1$, i.e., $\delta_F(1,1)\geq 1$. If $p=3$, $b^{q+1}=4$ for $b=1$ and thus $\delta_F(1,1)= 1$ due to the statement 1) in Case 2. If $q \not\equiv 2 \pmod{3}$ and $p\ne3$, one has $\theta=-3\ne 0$ for $b=1$, which is a square of $\fq^*$ by Lemma \ref{Issquare(-3)}, and thus $\delta_F(1,1)= 1$ due to the discussion in Case 2. If $q \equiv 2 \pmod{3}$, it can be verified that $-w$ and $-w^2$ are also the solutions of \eqref{thm2-eq-1} for $b=1$ due to the facts that $(w-1)^{q-1}=-w^2$ and $(w^2-1)^{q-1}=-w$.
Thus $\delta_F(1,1)=3$ if $q\equiv 2 \pmod{3}$.
Therefore we conclude that
\begin{eqnarray*}
\delta_F(1,1)=\delta_F(1,-1)=
\left\{\begin{array}{ll}
3, & \mbox{if $q\equiv 2\pmod{3}$}; \\
1,&\mbox{otherwise},
\end{array} \right.
\end{eqnarray*}
which indicates
\begin{eqnarray*}
\omega_{3}=
\left\{\begin{array}{ll}
2, & \mbox{if $q\equiv 2\pmod{3}$}; \\
0,&\mbox{otherwise}.
\end{array} \right.
\end{eqnarray*}

Then the differential spectrum of $F(x)$ follows from \eqref{identity-differential}. This completes the proof.
\end{proof}

The following corollary is useful to compute the boomerang spectrum of $F(x)=x^{k(q-1)}$.

\begin{cor} \label{diff-p>2-lem1}
Let $m$ and $k$ be positive integers, $p$ be an odd prime and $q=p^m$. Assume that $\gcd(k,q+1)=1$ and $q\equiv 2\pmod{3}$. Let $F(x)=x^{k(q-1)}$ be a power mapping over $\fqtwo$. If $b^{q+1}-4=w^2 b^{q-1}$, then $\delta_F(1,b)\ne 2$, where $w$ is a primitive $3$rd root of unity over $\fqtwo$.
\end{cor}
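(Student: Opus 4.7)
The plan is to follow the framework of the proof of Theorem \ref{differential-p>2}: substitute the hypothesis into the parameter $\theta=(b^{q+1}-4)/b^{q+1}$, use Lemma \ref{square-eq-p>2} to describe the $U_{q+1}$-roots of the quadratic \eqref{thm2-eq-7}, and match them against the values where the formula $x=-(b'z+1)/(b'z^2+1)$ from that proof breaks down.

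First I would rewrite $\theta$: the hypothesis $b^{q+1}-4=w^2 b^{q-1}$ gives $\theta=w^2/b^2$, and since $\theta\in\fq^*$, applying the Frobenius (together with $w^q=w^{-1}$, so $w^{q-1}=w$) forces $b^{2(q-1)}=w^2$, hence $b^{q-1}\in\{w,-w\}$. If $b^{q-1}=w$, then $\theta^{(q-1)/2}=w/w=1$, so $\theta$ is a nonzero square in $\fq^*$; Lemma \ref{square-eq-p>2} therefore gives no roots of the quadratic in $U_{q+1}$, and since $b^{q-1}=w\ne 1$ forbids $b=\pm 1$, the argument in the proof of Theorem \ref{differential-p>2} yields $\delta_F(1,b)=0\ne 2$.

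The interesting case is $b^{q-1}=-w$, where $\theta^{(q-1)/2}=-1$ is a nonsquare, and so the quadratic $y^2+by-w^2=0$ (using $1/b^{q-1}=-w^2$) has two roots $y_{1,2}=(-b\pm w)/2\in U_{q+1}$. Here the hypothesis simplifies to $b^2=-3w^2$, and the identity $1+w+w^2=0$ pins $b$ down to $\pm(w+2)=\{w^2-1,\,1-w^2\}$. Direct substitution then shows that $y=1$ solves the quadratic exactly when $b=w^2-1$, and $y=w^2$ solves it exactly when $b=1-w^2$, so in each sub-case one of the two roots falls into the bad set $\{1,-1/b^{q-1}\}=\{1,w^2\}$. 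Such a bad root corresponds, in the proof of Theorem \ref{differential-p>2}, to $(z-1)(b'z+1)=0$, so it produces no valid $x\in\fqtwo\setminus\{0,-1\}$; the other root produces exactly one. Since $b=\pm(w+2)\notin\{\pm 1\}$, no extra solutions come from $x=0$ or $x=-1$, giving $\delta_F(1,b)=1\ne 2$.

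I expect the main obstacle to be the sub-case $b^{q-1}=-w$: we must recognize that the hypothesis automatically forces one of the two $U_{q+1}$-roots of the quadratic into the bad set $\{1,w^2\}$. The identification uses the specific form $b^2=-3w^2$ combined with the cyclotomic identity $1+w+w^2=0$ to pin $b$ to $\pm(w+2)$, and a careful reading of the proof of Theorem \ref{differential-p>2} to see that $y\in\{1,w^2\}$ is precisely where the formula $x=-(b'z+1)/(b'z^2+1)$ fails to yield a solution of $\mathbb{D}_1 F(x)=b$, since those $y$'s would correspond to $x=-1$ or $x=0$, which are solutions only for $b=-1$ or $b=1$.
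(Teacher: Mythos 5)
Your proposal is correct and follows essentially the same route as the paper's proof: split into the two sub-cases (which the paper phrases as $b^{q+1}-4=b^{q-1}w^2=\pm 1$ and you phrase as $b^{q-1}=\pm w$), show $\theta$ is a square in the first so no $y\in U_{q+1}$ exists, and in the second identify that one of the two roots of $y^2+by+b^{1-q}=0$ lies in the degenerate set $\{1,-b^{1-q}\}$ corresponding to $x\in\{0,-1\}$. Your computation of $\theta^{(q-1)/2}=w^{q-1}/b^{q-1}$ in the first sub-case is a little more direct than the paper's detour through showing $5$ is a square, and your explicit determination $b=\pm(1-w^2)$ with roots $\{1,-w^2\}$ resp.\ $\{-1,w^2\}$ matches the paper's $b=\pm\sqrt{-3}\,w$ analysis, but these are presentational differences only.
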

\begin{proof}
Notice that $b^{q+1}-4 \in \fq$ and $b^{q-1}w^2 \in U_{q+1}$ for $q\equiv 2\pmod{3}$. Thus $b^{q+1}-4=b^{q-1}w^2= 1$ or $-1$ due to $\fq \cap U_{q+1} =\{1,-1\}$. Then we consider the following two cases:

\textbf{Case 1}: $b^{q+1}-4=b^{q-1}w^2= 1$. Clearly, we have $b^{q+1}=5$, $b^{q-1}=w $ and then $b^2=5w^2$ in this case. Due to $b^2=5w^2$ and $b^{q+1}=5$, we have $b^{q+1}=(5w^2)^{\frac{q+1}{2}}=5^{\frac{q+1}{2}}=5^{\frac{q-1}{2}}5=5$, which implies $5^{\frac{q-1}{2}}=1$, namely, $5$ is a square of $\fq^*$. Thus $\theta=\frac{b^{q+1}-4}{b^{q+1}}=\frac{1}{5}$ is also a square of $\fq^*$. This together with the fact that $b\notin \{0,\pm 1\}$ means that $\delta_F(1,b)\ne 2$ in this case according to the proof of Theorem \ref{differential-p>2}.

\textbf{Case 2}: $b^{q+1}-4=b^{q-1}w^2=-1$. If this case occurs, one then has $b^{q+1}=3$, $b^{q-1}=-w$ and $b^2=-3w^2$. Notice that $\{1,\frac{-\sqrt{-3}-1}{2}, \frac{\sqrt{-3}-1}{2}\}$ is the solution set of $x^3 = 1$ over $\fqtwo$. Without loss of generality, let $w=\frac{-\sqrt{-3}-1}{2}$ and $w^2=\frac{\sqrt{-3}-1}{2}$. Note that $b^2=-3w^2$ implies $b=\pm \sqrt{-3}w$. Next we only focus on the case $b=\sqrt{-3}w$ since $\delta_F(1,-\sqrt{-3}w)=\delta_F(1,\sqrt{-3}w)$.
Recall from the proof of Theorem \ref{differential-p>2} that $\delta_F(1,\sqrt{-3}w) \leq 2$ due to the fact that $b=\sqrt{-3}w \notin \{0,\pm 1\}$ for $q\equiv 2\pmod{3}$, and each solution $x\in \fqtwo$ of \eqref{thm2-eq-1} for $b=\sqrt{-3}w$ corresponds to a solution $y\in U_{q+1}$ of \eqref{thm2-eq-7}.
It can be verified that $\{-1,w^2\}$ is the solution set of \eqref{thm2-eq-7} for $b=\sqrt{-3}w$. When $b=\sqrt{-3}w$ and $y=w^2$, one has $b^{q-1}y=-1$. This together with the relations $z^k=y$ and $b^{q-1}=b'^k$, one gets $(b'z)^k=-1$, which implies $b'z=-1$. By \eqref{thm2-eq-9}, we have $x=0$. That is to say, the solution $y=w^2$ of \eqref{thm2-eq-7} cannot provide a solution in $\fqtwo \backslash \{0,-1\}$ to \eqref{thm2-eq-1} for $b=\sqrt{-3}w$, namely, $\delta_F(1,\sqrt{-3}w)\ne 2$ by the proof of Theorem \ref{differential-p>2}.

Combining Cases 1 and 2, the desired result follows. This completes the proof.
\end{proof}

\begin{thm}\label{boomerang-p>2}
Let $F(x)=x^{k(q-1)}$ be a power mapping over $\fqtwo$, where $q=p^m$, $p$ is an odd prime and $m$, $k$ are positive integers with $\gcd(k,q+1)=1$. Then the boomerang spectrum of $F(x)$ is given by
\[\mathbb{BS}_{F}=\{\nu_{0}=\frac{q^2-1}{2}+2q-6, \nu_{2}=\frac{q^2-1}{2}-2q+6\}\]
if $q\equiv 2\pmod{3}$; and otherwise
\[\mathbb{BS}_{F}=\{\nu_{0}=\frac{q^2-1}{2}+2(q-1), \nu_{2}=\frac{q^2-1}{2}-2(q-1)\}.\]
\end{thm}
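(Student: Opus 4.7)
The plan is to parallel the proof of Theorem \ref{boomerang-p=2}: the boomerang system \eqref{boomerang-eq} with $a=1$ is equivalent to the pair of conditions $\mathbb{D}_1(x)=\mathbb{D}_1(y)=c$ and $x^{k(q-1)}-y^{k(q-1)}=b$, so I stratify over the level sets $\Omega_i=\{c\in\fqtwo:\delta_F(1,c)=i\}$ dictated by Theorem \ref{differential-p>2}. Strata $\Omega_0$ and $\Omega_1$ produce no ordered pair with $x\ne y$, and $\Omega_{q-2}=\{0\}$ contributes only to $b=0$ because its $q-2$ solutions all lie in $\fq\setminus\{0,-1\}$ and so satisfy $x^{k(q-1)}=1=y^{k(q-1)}$.

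For the $\Omega_2$ stratum, the unique unordered pair $\{x_c,y_c\}$ of solutions of $\mathbb{D}_1(x)=c$ satisfies, by the proof of Theorem \ref{differential-p>2}, $x_c^{k(q-1)}=y_1$ and $y_c^{k(q-1)}=y_2$, where $y_1,y_2\in U_{q+1}$ are the two roots of $y^2+cy+1/c^{q-1}=0$. Vieta's formulas give
\begin{equation*}
B_c:=y_1-y_2,\qquad B_c^{2}=(y_1+y_2)^{2}-4y_1y_2=c^{2}-\frac{4}{c^{q-1}}\ne 0.
\end{equation*}
Using $(-1)^{k(q-1)}=1$ together with $\mathbb{D}_1(-x-1)=-\mathbb{D}_1(x)$, I check $B_{-c}=B_c$, so the four ordered pairs $(x_c,y_c),(y_c,x_c),(-x_c-1,-y_c-1),(-y_c-1,-x_c-1)$ contribute $2$ each to $\beta_F(1,B_c)$ and $\beta_F(1,-B_c)$.

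The crux is the injectivity of the map $[c]\mapsto[B_c]$ from $\Omega_2/\{\pm1\}$ to $\fqtwo^{\ast}/\{\pm1\}$. Fixing $b$, the equation $B_c^2=b^2$ becomes $c^{q-1}(c^2-b^2)=4$; a $q$-th-power-and-eliminate step (mimicking the derivation of \eqref{thm2-eq-7}) reduces it to $b^{q-1}(c^2-b^2)=\pm 4$. The ``$+$'' branch forces $c=\lambda b$ with $\lambda\in\fq$ and $\lambda^2=1+4/b^{q+1}$, whence a direct computation yields $\theta_c=1/\lambda^2$, a square in $\fq^{\ast}$, so $c\notin\Omega_2$. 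The ``$-$'' branch forces $c=\mu b$ with $\mu^{q-1}=-1$ and $\mu^2=1-4/b^{q+1}=\theta_c^{-1}$; these two values form a single pair $\{\pm\mu b\}$, and lie in $\Omega_2$ precisely when $1-4/b^{q+1}$ is a nonsquare of $\fq^{\ast}$ and $c$ avoids the exceptional values catalogued in the proof of Theorem \ref{differential-p>2}. Each such $b$ therefore gives $\beta_F(1,b)=2$, accounting for exactly $\omega_2$ values of $b$.

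Finally, if $q\equiv2\pmod3$ the stratum $\Omega_3=\{\pm1\}$ is nonempty. By the proof of Theorem \ref{differential-p>2} the solution sets of $\mathbb{D}_1(x)=\pm1$ are $\{0,-w,-w^2\}$ and $\{-1,w-1,w^2-1\}$, where $w$ is a primitive cube root of unity. A short computation using $(w-1)^{q-1}=-w^2$ shows that the six ordered pairs drawn from each triple produce the boomerang parameters $\{\pm w,\pm w^2,\pm\sqrt{-3}\}$, each attained exactly twice, so $\beta_F(1,b)\ge 2$ on this set. Corollary \ref{diff-p>2-lem1}, in its $w^2$-form and in the analogous $w$-form (obtained by swapping $w\leftrightarrow w^2$ in the proof), rules out any $\Omega_2$-contribution to $b\in\{\pm w,\pm w^2\}$, while for $b=\pm\sqrt{-3}$ the ``$-$'' branch of the injectivity analysis collapses to $c=\pm1\in\Omega_3$. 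Hence $\beta_F(1,b)=2$ on all six values, so $\nu_2=\omega_2+6$ when $q\equiv2\pmod3$ and $\nu_2=\omega_2$ otherwise; $\nu_0=q^2-1-\nu_2$ completes the enumeration. The main obstacle lies in the third paragraph: making sure that each candidate $c=\pm\mu b$ truly sits in $\Omega_2$ rather than in $\Omega_1\cup\Omega_3\cup\{c:c^{q+1}=4\}$, which requires re-tracking the boundary cases from the proof of Theorem \ref{differential-p>2}.
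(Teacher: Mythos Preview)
Your argument is correct and follows essentially the same strategy as the paper's proof: stratify the boomerang system over the level sets $\Omega_i$, dispose of $\Omega_0,\Omega_1,\Omega_{q-2}$ trivially, establish injectivity of $\{c,-c\}\mapsto\{B_c,-B_c\}$ on $\Omega_2$, and in the case $q\equiv 2\pmod 3$ verify that the six values produced by $\Omega_3$ are disjoint from the image of $\Omega_2$.

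The only real difference is cosmetic. For injectivity, the paper observes directly that $c\in\Omega_2$ forces $\theta_c$ to be a nonsquare, hence $(B_c/c)^{q-1}=\sqrt{\theta_c}^{\,q-1}=-1$; so two elements $c,c'\in\Omega_2$ with the same $B$-value satisfy $(c'/c)^{q-1}=1$, and then $B_c^2=B_{c'}^2$ immediately gives $c'^2=c^2$. Your route reaches the same dichotomy $(c/b)^{q-1}=\pm1$ by raising $c^{q-1}(c^2-b^2)=4$ to the $q$-th power and dividing (which yields $(c/b)^{2(q-1)}=1$); you then eliminate the ``$+$'' branch by computing $\theta_c=1/\lambda^2$ with $\lambda\in\fq$, hence a square. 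Both packagings are equivalent. Likewise, for $b=\pm(w^2-w)=\pm\sqrt{-3}$ the paper reuses the injectivity argument with $c=\pm1$ playing the role of the unique ``$-$''-branch pair, which is exactly your ``the `$-$' branch collapses to $c=\pm1\in\Omega_3$''.

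One remark on your closing sentence: the ``main obstacle'' you flag---checking that each candidate $c=\pm\mu b$ actually lies in $\Omega_2$---is not needed. You are counting the \emph{image} of $\Omega_2$ under $c\mapsto B_c$, and injectivity of $\{c,-c\}\mapsto\{B_c,-B_c\}$ already gives $|\mathrm{image}|=|\Omega_2|=\omega_2$; whether the inverse image of a generic $b$ lands back in $\Omega_2$ is irrelevant to that count. So the proof is complete once injectivity and the $\Omega_3$-disjointness are established, without re-tracking boundary cases.
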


\begin{proof}
According to the definition of the boomerang spectrum of $F(x)$, we need to compute the number $\beta_{F}(1,b)$ of solutions of the system of equations
\begin{eqnarray} \label{thm4-eq-1}
\left\{\begin{array}{ll}
x^{k(q-1)}-y^{k(q-1)}=b,  \\
(x+1)^{k(q-1)}-(y+1)^{k(q-1)}=b
\end{array} \right.
\end{eqnarray}
as $b$ runs through $\fqtwo^*$. Clearly, we have $x\ne y$ and  \eqref{thm4-eq-1} is equivalent to
\begin{eqnarray}\label{thm4-eq-2}
\left\{\begin{array}{ll}
x^{k(q-1)}-y^{k(q-1)}=b,\\
\mathbb{D}_{1}(x)=\mathbb{D}_{1}(y)
\end{array} \right.
\end{eqnarray}
where $\mathbb{D}_{1}(x)=(x+1)^{k(q-1)}-x^{k(q-1)}$.

Let $\mathbb{D}_{1}(x)=\mathbb{D}_{1}(y)=c$ for some $c\in \fqtwo$. Then we can discuss \eqref{thm4-eq-2} as follows:

\textbf{Case 1}: $q \not \equiv 2 \pmod{3}$. In this case, $\mathbb{D}_{1}(x)=c$ has either $0$, $1$, $2$ or $q-2$ solutions as $c$ runs through $\fqtwo$ by Theorem \ref{differential-p>2}. For $i=0, 1, 2, q-2$, define
\[\Omega_{i}=\{c\in \fqtwo: \delta_F(1,c) = i\}.\]

\textbf{Case 1.1}: $c\in \Omega_{0} \mbox{ or } \Omega_{1}$. In this case, it can be readily seen that \eqref{thm4-eq-2} has no solution $(x,y) \in \fqtwo^2$.

\textbf{Case 1.2}: $c\in \Omega_{q-2}$, i.e., $c=0$. According to Case 1 of the proof of Theorem \ref{differential-p>2}, $\fq \backslash \{0,-1\}$ is the solution set of $\mathbb{D}_{1}(x)=0$. This implies that $x,y$ satisfying \eqref{thm4-eq-2} are in $\fq \backslash \{0,-1\}$, which leads to $b=0$ by the first equation of \eqref{thm4-eq-2}.
Hence \eqref{thm4-eq-2} has no solution for any $b\in \fqtwo^*$.

\textbf{Case 1.3}: $c\in \Omega_{2}$. Let $x_{1}$ and $x_{2}$ be exactly the two solutions of $\mathbb{D}_{1}(x)=c$ for a fixed $c\in \Omega_{2}$, and let $b_{0}$ be defined by $x_{1}^{k(q-1)}-x_{2}^{k(q-1)}=b_{0}$. Clearly, $(x_{1},x_{2})$ (resp. $(x_{2},x_{1})$) is a solution of \eqref{thm4-eq-2} for $b=b_{0}$ (resp. $b=-b_{0}$). That is to say, $c\in \Omega_{2}$ contributes exactly one solution $(x_{1},x_{2})$ to \eqref{thm4-eq-2} for $b=b_{0}$.  Note that $-c\in \Omega_{2}$ if $c\in \Omega_{2}$ since $\delta_{F}(1,c)=\delta_{F}(1,-c)$ for any $c\in \fqtwo^*$. Moreover, $-(x_{1}+1)$ and $-(x_{2}+1)$ are exactly the two solutions of $\mathbb{D}_{1}(x)=-c$. This shows that $-c\in \Omega_{2}$ also contributes exactly one solution $(-(x_{1}+1),-(x_{2}+1))$ to \eqref{thm4-eq-2} for $b=b_{0}$.
Now we claim that for any $c'\in \Omega_{2} \backslash \{c,-c\}$ it cannot contribute solutions to \eqref{thm4-eq-2} when $b=b_{0}$.
Suppose that $c'\in \Omega_{2} \backslash \{c,-c\}$ contributes a solution to \eqref{thm4-eq-2} for $b=b_{0}$.
Recall from the proof of Theorem \ref{differential-p>2} that $x_{1}$ and $x_{2}$ correspond to the two solutions $y_{1}$ and $y_{2}$ in $U_{q+1}$ of \eqref{thm2-eq-7} respectively with the relation $x_{i}^{k(q-1)}=y_{i}$ for $i\in \{1,2\}$. Moreover,  without loss of generality, we may assume that $y_{1}=c\frac{-1+\sqrt{\theta}}{2}$ and $y_{2}=c\frac{-1-\sqrt{\theta}}{2}$ by the proof of Theorem \ref{differential-p>2}, where $\theta=\frac{c^{q+1}-4}{c^{q+1}}$ is a nonsquare of $\fq^*$. Then we have
$x_{1}^{k(q-1)}-x_{2}^{k(q-1)}=y_{1}-y_{2}=c\sqrt{\frac{c^{q+1}-4}{c^{q+1}}}$ which gives $c\sqrt{\frac{c^{q+1}-4}{c^{q+1}}}=b_{0}$. This leads to $(\frac{b_{0}}{c})^{q-1}=-1$. Similarly, we also have $c'\sqrt{\frac{c'^{q+1}-4}{c'^{q+1}}}=b_{0}$ and $(\frac{b_{0}}{c'})^{q-1}=-1$. Let $c'=\alpha c$ with $\alpha \in \fqtwo^* \backslash \{1,-1\}$. Clearly, it follows from $(\frac{b_{0}}{c})^{q-1}=(\frac{b_{0}}{c'})^{q-1}=-1$ that $\alpha^{q-1}=1$. This together with $c\sqrt{\frac{c^{q+1}-4}{c^{q+1}}}= c'\sqrt{\frac{c'^{q+1}-4}{c'^{q+1}}}$ gives $\alpha^{2}=1$ due to $c\ne0$, a contradiction with $\alpha \ne 1,-1$. This shows that in this case \eqref{thm4-eq-2} has exactly two solutions $(x_{1},x_{2})$ and $(-(x_{1}+1),-(x_{2}+1))$ for $b=b_{0}$.

Combining Cases 1.1, 1.2 and 1.3, we conclude that \eqref{thm4-eq-2} has either $0$ or $2$ solutions for $b\in \fqtwo^*$, namely, the boomerang uniformity of $F(x)$ is 2. Moreover, $\nu_{2}=|\Omega_{2}|=\omega_{2}=\frac{q^2-1}{2}-2(q-1)$, where the value of $\omega_{2}$ is given by Theorem \ref{differential-p>2}. This together with $\nu_{0}+\nu_{2}=q^2-1$ gives the boomerang spectrum of $F(x)$.

\textbf{Case 2}: $q \equiv 2 \pmod{3}$. If this case happens, then $\mathbb{D}_{1}(x)=c$ has either $0$, $1$, $2$, $3$ or $q-2$ solutions as $c$ runs through $\fqtwo$ by Theorem \ref{differential-p>2}. For $i=0, 1, 2, 3, q-2$, define
\[\Omega_{i}=\{c\in \fqtwo: \delta_F(1,c) = i\}.\]

\textbf{Case 2.1}: $c\in \Omega_{0}, \,\, \Omega_{1} \mbox{ or }  \Omega_{q-2}$. Similar to Cases 1.1 and 1.2, this case contributes no solution $(x,y) \in \fqtwo^2$ to \eqref{thm4-eq-2} for any $b\in \fqtwo^*$.

\textbf{Case 2.2}: $c\in \Omega_{2}$. Let $x_{1}$ and $x_{2}$ be exactly the two solutions of $\mathbb{D}_{1}(x)=c$ for a fixed $c\in \Omega_{2}$, and let $b_{0}$ be defined by $x_{1}^{k(q-1)}-x_{2}^{k(q-1)}=b_{0}$. Denote the set of $b_{0}$'s as $c$ runs through $\Omega_{2}$ by $\mathcal{B}$. Similar to Case 1.3, this case contributes exactly two solutions to \eqref{thm4-eq-2} for each $b \in \mathcal{B}$.
%

\textbf{Case 2.3}: $c\in \Omega_{3}$, i.e., $c=\pm 1$. According to the proof of Theorem \ref{differential-p>2}, $S_{1}=\{0,-w,-w^2\}$ is the solution set of $\mathbb{D}_{1}(x)=1$ and $S_{-1}=\{-1,w-1,w^2-1\}$ is the solution set of $\mathbb{D}_{1}(x)=-1$. Then any $(x,y)$ with $x\ne y$ and $x,y\in S_{i}$ for some $i\in \{1,-1\}$ is a solution of \eqref{thm4-eq-2} for some $b\in \fqtwo^*$. A direct computation shows that this contributes $2$ solutions to \eqref{thm4-eq-2} for each $b\in \{\pm w, \pm w^2, \pm (w^2-w)\}$. In the following we prove that $\pm w, \pm w^2, \pm (w^2-w) \notin \mathcal{B}$:

\textbf{1)}: $\pm (w^2-w) \notin \mathcal{B}$. Clearly, $\gcd(3,k)=1$ since $q \equiv 2 \pmod{3}$ and $\gcd(k,q+1)=1$. Note that $(-w^2,-w)$ and $(w^2-1,w-1)$ are solutions of \eqref{thm4-eq-2} for $b=w^{2k}-w^{k}$, and $(-w,-w^2)$ and $(w-1,w^2-1)$ are solutions of \eqref{thm4-eq-2} for $b=w^{k}-w^{2k}$, where $w^{2k}-w^{k}=w^{2}-w$ if $k\equiv 1 \pmod{3}$ and $w^{2k}-w^{k}=w-w^{2}$ otherwise. Without loss of generality, we only consider the case $k\equiv 1 \pmod{3}$ in the following. Thus, for $c=1$, it contributes the solution $(-w^2,-w)$ (resp. $(-w,-w^2)$) to \eqref{thm4-eq-2} for $b=w^{2}-w$ (resp. $b=w-w^{2}$) and for $c=-1$, it contributes the solution $(w^2-1,w-1)$ (resp. $(w-1,w^2-1)$) to \eqref{thm4-eq-2} for $b=w^{2}-w$ (resp. $b=w-w^{2}$). Recall from the proof of Theorem \ref{differential-p>2} that
the solutions $-w$ and $-w^2$ (resp. $w-1$ and $w^2-1$) of the equation $\mathbb{D}_{1}(x)=c$ for $c=1$ (resp. $c=-1$) correspond to the two solutions in $U_{q+1}$ of \eqref{thm2-eq-7}. Notice that $\pm 1\notin \Omega_{2}$. Thus, similar to the discussion in Case 1.3, one can prove that
for any $c\in \Omega_{2}$ it cannot contribute solutions to \eqref{thm4-eq-2} for $b=\pm (w^2-w)$, which indicates that $\pm (w^2-w) \notin \mathcal{B}$.

\textbf{2)}:  $\pm w, \pm w^2 \notin \mathcal{B}$. We only prove $w\notin \mathcal{B}$ here since the other cases can be proved in the same manner.
Suppose that $b_{0}=w \in \mathcal{B}$ corresponding to some $c\in \Omega_{2}$. Similar to Case 1.3, it follows that $c\sqrt{\frac{c^{q+1}-4}{c^{q+1}}}=b_{0}=w$, which yields $c^{q+1}-4=w^2 c^{q-1}$ by squaring both sides of the equation. By Corollary \ref{diff-p>2-lem1}, we have $c\notin \Omega_{2}$ if $c^{q+1}-4=w^2 c^{q-1}$. Thus $w\notin \mathcal{B}$.

With the discussion in Case 2, we conclude that \eqref{thm4-eq-2} has $0$ or $2$ solutions for $b\in \fqtwo^*$ and $\nu_{2}=|\Omega_{2}|+6=\omega_{2}+6=\frac{q^2-1}{2}-2q+6$. This together with $\nu_{0}+\nu_{2}=q^2-1$ gives the boomerang spectrum of $F(x)$ for the case $q \equiv 2 \pmod{3}$. This completes the proof.
\end{proof}

\section{Conclusion}
In this paper, inspired by an example on a locally-APN function with boomerang uniformity 2 over $\mathbb{F}_{2^8}$ proposed by Hasan, Pal and St\u{a}nic\u{a} in 2021, we studied the power mapping $F(x)=x^{k(q-1)}$ over $\fqtwo$ with $\gcd(k,q+1)=1$ for any prime power $q$ and generalized the example into an infinite class of locally-APN functions with boomerang uniformity 2. We completely determined the differential spectrum and boomerang spectrum of this class of power mappings for any prime power $q$. Our results on the differential spectrum of $F(x)$ extended the result of \cite[Theorem 7]{BCCP} from $(p,k)=(2,1)$ to general $(p,k)$. Moreover, our results on the boomerang spectrum of $F(x)$ generalized the results in \cite{HPSP,YLSF}.

%

\section*{Data Deposition Information}
The data used to support the findings of this study are available from the corresponding author upon request.

\section*{Declaration of competing interest}

The authors have no relevant financial or non-financial interests to disclose.

\section*{Acknowledgments}

This work was supported by the National Key Research and Development Program of China (No. 2021YFA1000600), the National Natural Science Foundation of China (No. 62072162), the Natural Science Foundation of Hubei Province of China (No. 2021CFA079) and the Knowledge Innovation Program of Wuhan-Basic Research (No. 2022010801010319).

\end{document}